\renewcommand{\paragraph}[1]{\medskip\noindent 
{\bf #1.}\ }
\newcommand{\asgn}{\ensuremath{{\sf :=}}}
\newcommand{\yes}{{\tt Y}\xspace}
\newcommand{\no}{{\tt N}\xspace}
\newcommand{\states}{\ensuremath{\mathsf{states}}\xspace}
\newcommand{\guard}{\ensuremath{\mathsf{guard}}\xspace}
\newcommand{\stmt}{\ensuremath{\mathsf{stmt}}\xspace}
\newcommand{\stmts}{\ensuremath{\mathsf{stmts}}\xspace}
\newcommand{\hb}[2]{\ensuremath{\mathsf{hb}({#1},{#2})}}
\newcommand{\rdvar}[2]{\ensuremath{\mathsf{rd}({#1},{#2})}}
\newcommand{\wrvar}[2]{\ensuremath{\mathsf{wr}({#1},{#2})}}
\newcommand{\waw}[3]{\ensuremath{\mathsf{waw}_{{#1}}({#2},{#3})}}
\newcommand{\war}[3]{\ensuremath{\mathsf{war}_{{#1}}({#2},{#3})}}
\newcommand{\raw}[3]{\ensuremath{\mathsf{raw}_{{#1}}({#2},{#3})}}
\newcommand{\true}{\ensuremath{\mathsf{true}}\xspace}
\newcommand{\false}{\ensuremath{\mathsf{false}}\xspace}
\newcommand{\defn}{\ensuremath{\stackrel{\textup{\tiny def}}{=}}}
\def\var{\mathrm{Var}}
\def\In{\mathsf{in}}
\def\Out{\mathsf{out}}
\newcommand{\ConCrest}{{\sf ConCrest}\xspace}
\newcommand{\Concbugassist}{{\sf ConcBugAssist}\xspace}
\newcommand{\Bugassist}{{\sf BugAssist}\xspace}
\newcommand{\project}[2]{\ensuremath{#2\mathord{\downharpoonright}_{{#1}}}}
\newcommand{\enc}{\ensuremath{\mathsf{enc}}\xspace}
\newcommand{\fsenc}[1]{\ensuremath{\mathsf{csenc}({#1})}}
\newcommand{\hsenc}[1]{\ensuremath{\mathsf{hsenc}({#1})}}
\newcommand{\acquire}{\ensuremath{\mathsf{acquire}}\xspace}
\newcommand{\release}{\ensuremath{\mathsf{release}}\xspace}
\begin{document}

\pagestyle{plain}


\title{Error Invariants for Concurrent Traces}

\author{
Andreas Holzer\inst{1}
\thanks{Funded by the Erwin Schr\"odinger Fellowship J3696-N26 of the Austrian Science Fund (FWF).} \and 
Daniel Schwartz-Narbonne\inst{2}\thanks{Research was performed at NYU.} \and
Mitra Tabaei Befrouei\inst{3} 
\thanks{Supported by the Austrian National Research
Network S11403-N23 (RiSE), the LogiCS doctoral program 
W1255-N23 of the Austrian Science Fund (FWF) and
by the Vienna Science and Technology Fund (WWTF) through
grant VRG11-005.} 
\and\\
Georg Weissenbacher\inst{3}$^{\star\,\star\,\star}$\and
Thomas Wies\inst{4} 
\thanks{Funded in part by
the National Science Foundation under grant
CCF-1350574.}
}

\institute{
University of Toronto
\and
Amazon
\and
TU Wien
\and
New York University
}

\maketitle

\begin{abstract}

Error invariants are assertions that over-approximate the reachable
program states at a given position in an error trace while only
capturing states that will still lead to failure if execution of the
trace is continued from that position. Such assertions reflect the
effect of statements that are involved in the root cause of an error
and its propagation, enabling slicing of statements that do not
contribute to the error. Previous work on error invariants focused on
sequential programs. We generalize error invariants to concurrent
traces by augmenting them with additional information about hazards
such as write-after-write events, which are often involved in race
conditions and atomicity violations.
By providing the option to include varying levels of details in error
invariants---such as hazards and branching information---our approach
allows the programmer to systematically analyze individual aspects of
an error trace.  We have implemented a hazard-sensitive slicing tool
for concurrent traces based on error invariants and evaluated it on
benchmarks covering a broad range of real-world concurrency bugs.
Hazard-sensitive slicing significantly reduced the length of the
considered traces and still maintained the root causes of the
concurrency bugs.
\end{abstract}


\section{Introduction}
\label{sec:introduction}
\newcommand{\arrayvar}{a}

Debugging is notoriously time consuming. 
Once a program failure has been observed, the developer must identify
a cause-effect chain of events that led to it. This task is complicated 
by the fact that the underlying failing execution trace can contain 
a large number of events that do not contribute to the failure.

Error invariants \cite{ErmisSW12,Christ:flowSensitive,Murali:hybrid}
are (automatically generated) annotations of a given failing execution
trace that can support the developer in his endeavor to narrow down
the statements involved in the failure. Error invariants provide,
for each point in the trace, an over-approximation of the reachable 
states that will produce a failure if execution of the trace is continued from
that point (cf.~Definition~\ref{def:err_inv}). Consequently, two subsequent error invariants in an erroneous 
execution reflect the relevance of the interjacent statement to the observed failure.
Statements that leave the error invariant unchanged do not
contribute to the failure and can be safely ignored during the
failure analysis \cite{Murali:hybrid}.
 
Intuitively, failure analysis with error invariants can be understood as
a variant of dynamic slicing~\cite{Tip95asurvey} that takes the
semantics of the failure into account. Existing dynamic slicing
techniques are based on data- and control-flow dependencies and
remove statements which can not impact the failing state via 
any chain of dependencies. However, compared to error invariants
the precision of these syntax-based slicing techniques
is limited by the fact that the semantics of the erroneous trace
is not taken into account.

Error invariants have been successfully deployed for constructing 
semantics-aware slices in sequential software. The enabling techniques for the automated
generation of error invariants and slicing are \emph{unsatisfiable cores}
and \emph{interpolation}. An error trace translated into an unsatisfiable first-order logical
formula yields a proof of unsatisfiability from which interpolants can be extracted. 
These interpolants which correspond to assertions representing the error invariants can be 
used to construct a slice of the error trace that abstracts from the irrelevant statements and
explains the faulty behavior. This approach produces a slice
of the original trace annotated with assertions (the obtained
error invariants) showing the relevant values and variables to the
failure.

\begin{figure}[t]
	\centering
    	\small
    	\begin{minipage}[t]{0.45\textwidth}
    	\centering
    	\vspace{0pt}
    	\begin{tabularx}{\textwidth}{lX}
    	\multicolumn{2}{c}{\sffamily Code fragment-Deposit: $T_1$}\\
    	\toprule
    	& \scalebox{0.65}{\vdots}\\
    	& $\acquire~\ell$;\\
    	& \scalebox{0.65}{\vdots}\\
    	$L_1$: & $\mathsf{bal}:=\boldsymbol{\mathsf{balance}}$;\\
    	& $\release~\ell$;\\[5pt]
    	& {\sffamily if (bal+\arrayvar[i]$\leq$MAX})\\
      & \hspace{1em}{\sffamily bal $=$ bal+\arrayvar[i]};\\[5pt]
      & $\acquire~\ell$;\\
      $L_2$: & ${\boldsymbol{\mathsf{balance}}:=\mathsf{bal}}$;\\
      & \scalebox{0.65}{\vdots}\\
      & $\release~\ell$;\\
      & \scalebox{0.65}{\vdots}
      \end{tabularx}
      \end{minipage}%
      \hfill
      \begin{minipage}[t]{0.45\textwidth}
      \centering
      \vspace{0pt}
      \begin{tabularx}{\textwidth}{lX}
      \multicolumn{2}{c}{\sffamily Code fragment-Withdrawal: $T_2$}\\
    	\toprule
    	& \scalebox{0.65}{\vdots}\\
    	& $\acquire~\ell$;\\
    	& \scalebox{0.65}{\vdots}\\
    	$L'_1$: & $\mathsf{bal}:=\boldsymbol{\mathsf{balance}}$;\\
    	& $\release~\ell$;\\[5pt]
    	& {\sffamily if (bal-\arrayvar[j]$\geq$MIN})\\
      & \hspace{1em}{\sffamily bal $=$ bal-\arrayvar[j]};\\[5pt]
      & $\acquire~\ell$;\\
      $L'_2$: & ${\boldsymbol{\mathsf{balance}}:=\mathsf{bal}}$;\\
      & \scalebox{0.65}{\vdots}\\
      & $\release~\ell$;\\
      & \scalebox{0.65}{\vdots}
    	\end{tabularx}
    	\end{minipage}
    	\vspace{-1em}
  \caption{Non-atomic update of bank account balance\label{fig:bank_example}}
  \vspace{-2em}
\end{figure}

\paragraph{Error Invariants for Concurrent Traces} While
error invariants faithfully reflect sequential control- and data-flow,
concurrency aspects are ignored entirely. Consequently, a naive
application of error invariants to concurrent traces leads to
undesirable slices.

Consider, for example, the code fragments in Figure~\ref{fig:bank_example}.
At locations~$L_2$ and $L'_2$, respectively, threads~$T_1$ and~$T_2$ 
update the balance of a bank account which is stored in the shared variable 
{\sf balance}. 
The array {\sf \arrayvar} contains the sequence of 5 amounts to be transferred, 
partitioned into three deposits ($1\leq i\leq 3$) and two withdrawals
($4\leq j\leq 5$) executed by thread $T_1$ and $T_2$ in parallel,
respectively. Figure~\ref{fig:errinv_haz} shows the suffix of
a failing interleaved execution in which the third deposit is lost because
of an atomicity violation.
After three successful transactions (two deposits and one withdrawal)
thread $T_2$ stores the current {\sf balance} in a thread-local
variable {\sf bal}. 
At this point, $T_1$ interferes and updates the value of {\sf balance} by
performing the third deposit.
Thread~$T_2$, then, proceeds with the now stale value stored in {\sf bal}
and stores the result of the last withdrawal transaction in {\sf balance}.
Consequently, the execution results in a discrepancy of the expected and the 
actual balance on the account.

The problem is that the final value of {\sf balance} depends on the
sequence (or timing) of concurrently executed statements, i.e., the
program contains a \emph{data hazard}. As the statements are not
executed in the order expected by the programmer, the hazard results
in an erroneous state, which propagates to the end of the program where
it surfaces as a failure. In this setting, the fault the programmer
is looking for is the above-mentioned data hazard, in particular
the write-after-write dependency between $L_2$ and $L'_2$.

The gray assertions in Figure~\ref{fig:errinv_haz} represent
error invariants computed using the approach we propose in this paper.
The assertion after $L'_1$ states that the local variable {\sf bal}
reflects at most two deposits and one withdrawal. At this point,
the fault has not been triggered yet. The last conjunct in the error invariant
after the context switch indicates that the value of {\sf bal} is
unchanged. The error invariants produced by previous
techniques~\cite{ErmisSW12,Christ:flowSensitive,Murali:hybrid} 
track only the state information captured by this final conjunct.  
Therefore they would slice away all the statements of thread $T_1$
since the error invariants before and after the context switch would
be identical. Thus, the resulting slice
would not reflect the data hazard and not even the relevant interleaving.

To address this shortcoming, we lift interpolation-based slicing
techniques to a concurrency setting by taking into account control and
data dependencies between threads. The second assertion in $T_2$
(after the context switch) already reflects this adaptation: the
expression $\hb{L'_1}{L_2}\wedge\hb{L_2}{L'_2}$ indicates that the
statement at $L'_1$ happened before the statement at $L_2$, which in
turn happened before the one at $L'_2$. This specific order
is crucial to the failure. A slicing algorithm taking this information
into account cannot safely slice the statement at $L_2$ in thread
$T_1$ anymore. 
Note that, unlike previous techniques, 
error invariants in our approach not only reflect a set of states but also the
execution order of critical statements via the happens-before relation
(cf.~Section~\ref{subsec:data_dep}).

\begin{figure}[t]
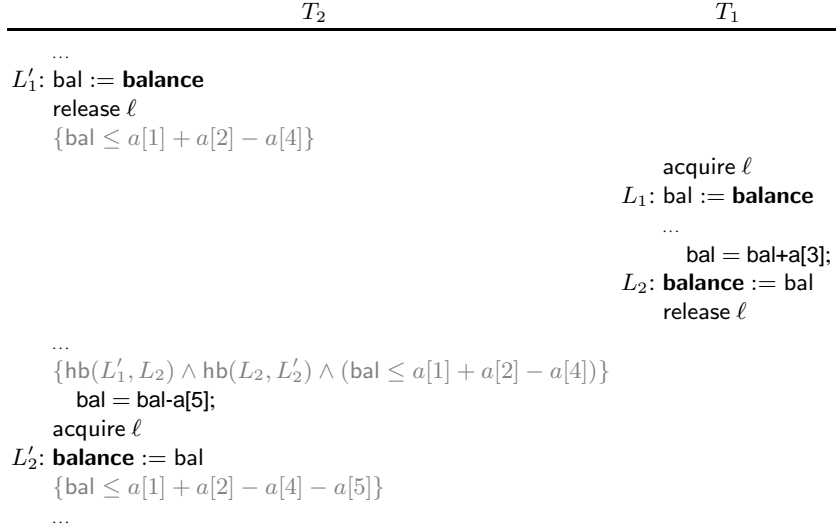
\centering
\begin{minipage}[t]{\textwidth}
\vspace{0pt}
\centering
{
\small
\begin{tabular}{ll@{\hspace{.45em}}ll}
\multicolumn{2}{c}{$T_2$} & \multicolumn{2}{c}{$T_1$}\\
\toprule
&\scalebox{0.65}{\dots}\\
$L'_1$: & $\mathsf{bal}:=\boldsymbol{\mathsf{balance}}$\\
& $\release~\ell$\\
& \textcolor{gray}{$\left\{\mathsf{bal}\leq
  \arrayvar[1]+\arrayvar[2]-\arrayvar[4]\right\}$}\\
&&& $\acquire~\ell$\\
&& $L_1$: & $\mathsf{bal}:=\boldsymbol{\mathsf{balance}}$\\
&&& \scalebox{0.65}{\dots}\\
&&&\hspace{1em}{\sffamily bal $=$ bal+\arrayvar[3]};\\
&& $L_2$: & $\boldsymbol{\mathsf{balance}}:=\mathsf{bal}$\\
&&& $\release~\ell$\\
&\scalebox{0.65}{\dots}\\
& \textcolor{gray}{$\left\{\hb{L'_1}{L_2}\wedge\hb{L_2}{L'_2}\wedge\left(\mathsf{bal}\leq
  \arrayvar[1]+\arrayvar[2]-\arrayvar[4]\right)\right\}$}\\
&\hspace{1em}{\sffamily bal $=$ bal-\arrayvar[5]};\\
&$\acquire~\ell$\\
$L'_2$: & $\boldsymbol{\mathsf{balance}}:=\mathsf{bal}$\\
& \textcolor{gray}{$\left\{\mathsf{bal}\leq
  \arrayvar[1]+\arrayvar[2]-\arrayvar[4]-\arrayvar[5]\right\}$}\\
&\scalebox{0.65}{\dots}
\end{tabular}
}
	\vspace{-.75em}
  \caption{Error trace with hazard-sensitive error invariants\label{fig:errinv_haz}}
	\vspace{-1.7em}
\end{minipage}
\end{figure}

Inter-thread data dependencies enable us to isolate (among other bugs)
race conditions and atomicity violations which constitute the predominant 
class of non-deadlock concurrency bugs~\cite{lu2008learning}.
Contrary to other concurrency debugging 
tools~\cite{DBLP:conf/sosp/EnglerA03,DBLP:journals/tocs/SavageBNSA97,FlanaganF10,FlanaganQ03,Park12,ParkVH10}
which target specific kinds of bugs,
we provide a general framework for concurrency bug explanation.
We applied an implementation of our approach to error traces 
generated from concurrent C programs using the directed testing tool 
\ConCrest~\cite{FarzanHRV13}.
We evaluate our approach on benchmarks that contain bugs found in real-world 
software such as Apache, GCC, and MySQL~\cite{Khoshnood2015}. 
On average, our slices yield a significant reduction of the number 
of variables and the length of the considered traces while maintaining
information that is crucial to understand the underlying concurrency bug.

\section{Preliminaries}
\label{sec:preliminaries}

\paragraph{Syntax of Concurrent Programs}
A concurrent program comprises multiple threads each represented by its control-flow graph
(CFG) \cite[\S 7]{muchnick}.

\begin{definition}[Control-Flow Graph] A CFG $\langle N, E\rangle$
  comprises nodes $N$ and edges $E$.
  Each node $n\in N$ corresponds to a single programming
  construct from a simple imperative language
  comprising assignments ${\sf x} \asgn e$
  and conditions $R$.

  Nodes representing
  conditional statements have two outgoing 
  edges labeled \yes and \no, respectively, corresponding to the
  positive and negative outcome of the condition. All other 
  nodes -- except the exit node, which has no successors --
  have out-degree one.
\end{definition}

\tikzstyle{block} = [rectangle, draw, minimum height=1.75em]
\tikzstyle{edge} = [->, >=stealth]
If a node $m$
is control dependent on a node $n$ and $n$ represents a condition, its
outcome can determine whether $m$ is reached:

\begin{definition}[Dominators and Control Dependency]
  \label{def:ctrl-dep}
  A node $m$ post-domi\-nates a node $n$ if all paths to the
  exit node starting at $n$ must go through $m$. Node $m$
  is control dependent on $n$ (where $n\neq m$) if
  $m$ does not post-dominate $n$ and there
  exists a path from $n$ to $m$ such that
  $m$ post-dominates all nodes (other than $n$) on that
  path.
\end{definition}

Based on Definition \ref{def:ctrl-dep}, we introduce our
notion of a scope: 
\begin{definition}[Scope]
  \label{def:scope}
  A node $m$ is in scope of the condition at node
  $n$ if $m$ is control dependent on $n$ or in scope of
  a condition that is control dependent on $n$.
\end{definition}
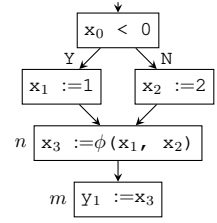
\begin{wrapfigure}[11]{r}{0.325\textwidth}
	\vspace{-2em}
	\centering
    \begin{tikzpicture}[scale=.75,every node/.style={scale=0.75}]
      \node [block] (cond) at (1,2) { \tt x$_0$ < 0 };
      \node [block] (then) at (0,1) { \tt x$_1$ \asgn 1 };
      \node [block] (else) at (2,1) { \tt x$_2$ \asgn 2 };
      \node [block,label=left:{$n$}] (phi) at (1,0) { 
        {\tt x$_3$ \asgn $\phi$(x$_1$, x$_2$)}};
      \node [block,label=left:{$m$}] (last) at (1,-1) { 
        {\tt y$_1$ \asgn x$_3$ }};
      
      \draw[edge] ($(cond)+(0,.5)$)--(cond);
      \draw[edge] (cond)--node[left=.1cm]{\yes}(then);
      \draw[edge] (cond)--node[right=.1cm]{\no}(else);
      \draw[edge] (then)--(phi);
      \draw[edge] (else)--(phi);
      \draw[edge] (phi)--(last);
    \end{tikzpicture}
    \caption{SSA form of: {\scriptsize \tt if~(x<0) then x\asgn 1 else x\asgn 2; y\asgn x}
      \label{fig:ssa}}
\end{wrapfigure}
A CFG is in Static Single Assignment (SSA) form \cite{cfrwz91}
if each variable is assigned exactly once.
The standard mechanism to translate CFGs into SSA form
is to subscript each definition of a variable with a unique version
number; consequently, each definition is uniquely identified by 
the corresponding SSA variable. 
Conflicting definitions at a control-flow merge point
$m$ in a CFG are resolved by 
introducing an arbiter node $n$ (with sole successor $m$) 
to which we divert the incoming edges of $m$. 
The arbiter node $n$ is annotated with a $\phi$-function
which switches between the definitions from different
incoming paths (see Figure \ref{fig:ssa}).
Algorithms to convert a program into SSA form are described
in \cite{cfrwz91} and \cite[\S 8.11]{muchnick}.

\begin{definition}[Program Path]
  Let $\langle N_t, E_t\rangle$ be a CFG representing a thread $t$.
  A path $P_t$ of thread $t$ is a sequence
  $n_1,\langle n_1,n_2\rangle,n_2$, \ldots, $\langle n_{k-1},n_k\rangle,n_k$
  of nodes $n_i\in N_t$ and edges $\langle n_i,n_{i+1}\rangle\in E_t$.
  A program path $P\defn n_1,\langle n_1,n_2\rangle,n_2$, \ldots, $\langle
  n_{k-1},n_k\rangle,n_k$
  corresponds to an interleaving of paths of threads
  (starting at their respective initial nodes)
  such that for each $i$ with $1\leq i<k$ either 
  $n_i, n_{i+1}\in N_t$ and 
  $\langle n_i, n_{i+1}\rangle\in E_t$ for some thread $t$,
  or $n_i$ and $n_{i+1}$ belong
  to different threads and $\langle n_i, n_{i+1}\rangle$ is an
  inter-thread edge representing a context switch.
\end{definition}

Given a (program) path $P$,
let $[n_i,n_j]$ denote the sub-path $n_i,$ $\langle
n_i,n_{i+1}\rangle$, $n_{i+1}$, \ldots, $n_{j-1}$, $\langle n_{j-1},n_j\rangle,n_j$
of $P$ including the nodes $n_i$ and $n_j$ and $(n_i,n_j)$
the sub-path $\langle n_i,n_{i+1}\rangle,n_{i+1}$, \ldots,
$n_{j-1},\langle n_{j-1},n_j\rangle$ excluding
the nodes $n_i$ and $n_j$. We use $\project{t}{P}$ to denote the
projection of a program path $P$ to thread $t$ in which only
nodes $n_i\in N_t$ and edges $\langle n_i,n_{i+1}\rangle\in E_t$ are
retained and any sub-path
$(n_i,n_j)$ with $n_i,n_j\in N_t$ and $n_l\notin N_t$ for $i<l<j$ is replaced
with the edge $\langle n_i,n_j\rangle$, i.e., $\project{t}{P}$ is a
path of the thread $t$. Consequently, for each program path $P$, 
$\project{t}{P}$ is either empty (if $P$ does not visit thread
$t$) or a path of thread $t$ starting at the initial node of $t$. 
Finally, $\project{N}{P}$ and $\project{E}{P}$
denote the projection of $P$ to the sequence of nodes $N$ and
edges $E$ in $P$, respectively. 

\paragraph{Semantics, Feasible Executions and Error Traces} 
The variables of a program are partitioned into
\emph{global} and \emph{thread-local} variables.
A state $s$ maps each variable to
a value, and $s(e)$ denotes the value of
expression $e$ in state $s$.

A program path $P$ corresponds to a sequence of statements. 
We require that each statement refers to at most one global variable, and hence statements execute atomically.  
\begin{definition}[Execution]
  An execution of a path $P$ corresponds to an execution of the
  statements of $P$ in order (starting in an initial state). We use
  $\stmt_P(n_i)$ to denote the statement represented by node $n_i$ in
  a path $P$. In particular, if node $n_i$ represents the condition
  $R$, let $t$ be such that $n_i\in N_t$ and let $\langle
  n_i,n_j\rangle$ be the first edge in $\project{t}{P}$ succeeding
  $n_i$.  Then $\stmt_P(n_i)$ is $R$ if $\langle n_i,n_j\rangle$ is
  labeled \yes, $\neg R$ if the edge is labeled \no. If
  $n_i$ is the last node of a thread $t$, then $\stmt_P(n_i)=\true$.

The execution of one statement in
the current program state $s$ is defined as follows:
\begin{compactitem}
\item If $\stmt_P(n_i)$ is the assignment ${\tt x}\asgn e$,
  the successor state of $s$ is updated such that
 ${\tt x}$ evaluates to $s(e)$ and all other variables
  are unchanged.
\item If $n_i$ is a conditional statement $R$,
  the execution proceeds iff $s(\stmt_P(n_i))$
  is true.
\end{compactitem}
\end{definition}

A path $P$ is \emph{feasible} if
there exists an initial state $s$ for which the
execution of $P$ is not blocked by a condition which is false. Given a path $P$,
we use $\stmts_P$ to denote the sequence of statements
represented by $P$. Abusing our notation,
we sometimes call $\stmts_P$ a path and will
use $P$ and $\stmts_P$ interchangeably. 

We use $\stmts_P[i]$ to denote
the $i^{\text{th}}$ statement $\stmt_P(n_i)$ of a path $P$,
and $\stmts_P[i,j]$ to denote the sub-path
$\stmts_P[i]{\sf;}$ $\ldots{\sf;}$ $\stmts_P[j]$
($[n_i,n_j]$, respectively).
We drop the subscript $P$ if it is clear from the context.

A state $s_j$ is reachable from a state $s_{i-1}$ via a sub-path
$\stmts_P[i,j]$ if an execution of $\stmts_P[i,j]$ starting
in $s_{i-1}$ does not block and results in state $s_j$. 

We assume that the correctness of a path is determined by an assertion
$\psi$ expected to hold after the execution of the path. Error traces which
result in the violation of $\psi$ are defined as:
\begin{definition}[Error Trace] A path $P$
  is an \emph{error trace} for the assertion $\psi$
  if $P$ is feasible and always results in a state $s$
  such that $s(\psi)$ is false.
  \label{def:error_trace}
\end{definition}
Intuitively, an error trace is an execution of a failing test case
that does not satisfy the specification $\psi$.
We assume (w.l.o.g.) that path $P$ in Definition
\ref{def:error_trace} reaches the 
end of the main thread, where $\psi$ is asserted. Consequently,
$\psi$ is not in scope of any condition.

\section{Error Explanation}
\label{sec:fault-localization-paths}
In this section, we first recall the interpolation-based slicing approach
presented in \cite{ErmisSW12,Christ:flowSensitive} for sequential software. 
We then explain how we extend it to concurrent executions. 

\subsection{Interpolation-based Slicing for Sequential Traces}
\label{sec:error_invariants}
Ermis et al.~\cite{ErmisSW12} and Christ
et al.~\cite{Christ:flowSensitive} use \emph{error invariants} to identify
statements that do not contribute to the assertion violation in sequential traces.

\begin{definition}[Error Invariant]
  Given an error trace $P$ of length $k$ for assertion $\psi$,
  an error invariant for position $i$ (with $i\leq k$) is a set of states
  $E$ such that
  \begin{compactitem}
  \item[(a)] $E$ contains (at least) all states
    reachable from an initial state via $\stmts_P[1,i]$, and
  \item[(b)] every feasible execution of $\stmts_P[i+1,k]$
    starting from a state in $E$ results in a state
    in which $\psi$ is false.
  \end{compactitem}
  An error invariant $E$ is
  \emph{recurring}\footnote{To avoid confusion with inductive
    interpolant sequences (Definition~\ref{def:ind_itp}),
    we replace the notion of \emph{inductive error
      invariants}~\cite{ErmisSW12,Christ:flowSensitive}
    with recurring error invariants.}
  for positions $i\leq j$ if $E$ is an
  error invariant for $i$ as well as for $j$.
  \label{def:err_inv}
\end{definition}

Intuitively, an error invariant $E$ represents an over-approx\-imation
of the states that are reachable via the path $\stmts[1,i]$ such that 
$\stmts[i+1,k]$ if executed from a state in $E$ still results
in failure. According to~\cite{ErmisSW12,Christ:flowSensitive},
statements between a recurring error invariant are
``not needed to reproduce the error.''

Error invariants can be derived using Craig
interpolation (defined below) and a symbolic encoding 
of a path $P$ \cite{ErmisSW12,Christ:flowSensitive}.
In the following, we derive a symbolic
encoding $\enc(P)$ similar to the one in \cite{ErmisSW12}
from a straight-line program in SSA form, which represents
the path $P$ to be encoded. This straight-line program
is obtained by traversing the CFG along $P$.
If a node is visited repeatedly (via a cycle in one
of the CFGs), a new version of the variable is introduced;
for straight-line programs
(which do not contain control-flow merge points)
it suffices to increase the version number of a variable
each time it is assigned and refer to the latest version of
each variable in conditions and right-hand sides of assignments.

Given a path $P$ in SSA form as described above, 
the formula $\enc(P)$ is a conjunction
$\bigwedge_{i=1}^k \enc_P(n_i)$ of the encodings of the
individual statements:
\begin{equation}
  \enc_P(n_i)\defn
  \left\{
  \begin{array}{ll}
    ({\sf x}_i=e) & \text{if}~\stmt_P(n_i)~\text{is}~{\sf
      x}_i\,\asgn\,e\\
    \stmt_P(n_i) & \text{if}~\stmt_P(n_i)~\text{is a condition}\\
  \end{array}
  \right.
  \label{eq:simple_enc}
\end{equation}

Variable assignments that satisfy formula $\enc(P)$ correspond
to executions; note that if all variables in $P$ are
initialized before being used, $\enc(P)$ has only
one unique satisfying assignment. In this context,
interpolants (Definition \ref{def:itp} below)
are a symbolic representation of sets of states.
Let $\var(A)$ be the set of (free) variables occurring in a formula $A$.
An interpolant $I$ is a predicate that encodes all
states $s$ for which $s(I)$ is true. We define
$\states(I)\defn\{s\,\vert\,s(I)=\true\}$.

\begin{definition}[Interpolant]
  Let $A$ and be $B$ be a pair of first-order formulas such that
  $A\wedge B$ is unsatisfiable. An \emph{interpolant} of
  $A$ and $B$ is a first-order formula $I$ such that
  $A\Rightarrow I$, $B\Rightarrow \neg I$, and
  $\var(I)\subseteq\var(A)\cap\var(B)$.
  \label{def:itp}
\end{definition}
Definition~\ref{def:itp} corresponds to the definition of
interpolants in~\cite{McMillan05} under the assumption that all
non-logical symbols in $A$ and $B$ are interpreted.

The following definition is a generalization of
interpolants:
\begin{definition}[Inductive Interpolant Sequence]
  Let $A_1$, \ldots, $A_n$ be a sequence of first-order formulas
  whose conjunction is unsatisfiable. Then $I_0,\ldots I_n$
  is an inductive interpolant sequence if
  \begin{compactitem}
  \item $I_0=\true$ and $I_n=\false$,
  \item for all $1\leq i \leq n$, $I_{i-1}\wedge A_i\Rightarrow I_i$,
    and
  \item for all $1\leq i<n$, $\var(I_i)\in
    (\var(A_1\wedge\ldots\wedge A_i)\cap\var(A_{i+1}\wedge\ldots\wedge A_n))$. 
  \end{compactitem}
  \label{def:ind_itp}
\end{definition}
\tikzset{
  states/.style={
    circle,
    draw,
    fill=white,
    minimum height=1cm,
    text centered},
}
Given a path
$P\defn n_1$, $\langle n_1,n_2\rangle,n_2$, \ldots,
$\langle n_{k-1},n_k\rangle$, $n_k$
in SSA form, a sequence interpolant $I_0,\ldots,I_{k+1}$
derived from the formulas $\enc_P(n_1)$, \ldots, $\enc_P(n_k)$, $\psi$
is inductive in the sense that 
$\states(I_{i})$ contains all states reachable from $\states(I_{i-1})$ via
$\stmt(n_i)$ (and potentially more) \cite{impact,Murali:hybrid}.
Moreover,
$I_k\wedge\psi$ is not satisfiable, i.e., all states represented
by $I_k$ violate assertion $\psi$.
If $I_i$ represents an error invariant for positions
$i$ and $j$ (i.e., $\states(I_i)$ is an error invariant for $j$
and $I_i$ implies $I_j$)
then $I_i$ is inductive with respect to the sub-path $\stmts_P[i+1,j]$.
Accordingly, slicing $[n_{i+1},n_j]$ away (i.e, replacing it
with an edge $\langle n_{i+1},n_j\rangle$) preserves the
assertion violation.

A trace obtained by removing statements between 
recurring error invariants from $P$ is sound in the sense of Definition
\ref{def:sound_slice} below:
\begin{definition}[Sound Slice]
  A slice of path $P$ of length $k$ is a
  path $Q$ of length $m$ with
  $\stmts_Q[1]=\stmts_P[i_1]$,
  $\stmts_Q[2]=\stmts_P[i_2]$, \ldots,
  $\stmts_Q[m]=\stmts_P[i_m]$ with
  $1\leq i_1<i_2<\ldots<i_m\leq k$. Given an error trace
  $P$ for $\psi$, a slice $Q$ of
  $P$ is \emph{sound} if 
  $Q$ is also an error trace for $\psi$.
  \label{def:sound_slice}
\end{definition}

\subsection{Interpolation-based Slicing for Concurrent Traces}
\label{subsec:concurrency}
In the following, we enhance and extend the interpolation-based
slicing technique discussed in Section \ref{sec:error_invariants}
to take control dependency as well as concurrency into account.

\paragraph{Control Dependencies}
\label{subsec:control_sensitive}
The following example shows that the encoding $\enc({\stmts})$
fails to capture control dependence (Definition \ref{def:ctrl-dep}).

\begin{example}
  Figure \ref{fig:cs_insensitive} shows the statements of a path
  $P$ (in SSA form) and a corresponding interpolant sequence on the right.
  The example is a sequential variation of the bank account example
  which fails if the required minimum balance {\sf MIN} is larger than zero.
  The resulting slice (indicated in bold) contains
  only the last assignment to {\sf bal} and the assertion $\psi$.
  It does not reflect the fact that the \yes-branch of
  the conditional statement has to be taken for the failure
  to occur.
  \label{ex:cf_insensitive}
\end{example}

\begin{figure}[t]\centering
\subfloat[Control-insensitive slice\label{fig:cs_insensitive}]{
\small
  \begin{tikzpicture}
    \node at (0,0) {
      \begin{minipage}{.8\columnwidth}
        \begin{tabbing}
          \quad\=\hspace{2.5cm}\=\kill
          $\mathsf{bal_1\asgn MIN;}$ \>\>\\ 
          $\mathsf{a_1\asgn -100;}$ \>\> \\ 
          $\mathsf{if (bal_1+a_1 \leq MIN)}$ \>\> \\
          \>\boldmath$\mathsf{bal_2\asgn 0;}$ \>\\
          \boldmath$\mathsf{assert(bal_2 \geq MIN);}$ 
        \end{tabbing}
    \end{minipage}};
    \draw[gray]
    (-1,2.5\baselineskip)--(1.75,2.5\baselineskip)node[right]{$\{\true\}$};
    \draw[gray]
    (-1,1.5\baselineskip)--(1.75,1.5\baselineskip)node[right]{$\{\true\}$};
    \draw[gray]
    (-1,.5\baselineskip)--(1.75,.5\baselineskip)node[right]{$\{\true\}$};
    \draw[gray]
    (-1,-.5\baselineskip)--(1.75,-.5\baselineskip)node[right]{$\{\true\}$};
    \draw[gray]
    (-1,-1.5\baselineskip)--(1.75,-1.5\baselineskip)node[right]{$\{ {\sf bal}_2=0 \}$};
    \draw[gray]
    (-1,-2.5\baselineskip)--(1.75,-2.5\baselineskip)node[right]{$\{\false\}$};
  \end{tikzpicture}
}%
\subfloat[Control-sensitive slice\label{fig:cs_sensitive}]{
\small
  \begin{tikzpicture}
    \node at (0,0) {
      \begin{minipage}{.8\columnwidth}
        \begin{tabbing}
          \quad\=\hspace{3cm}\=\kill
          \boldmath$\mathsf{bal_1\asgn MIN;}$ \>\>\\ 
          \boldmath$\mathsf{a_1\asgn -100;}$ \>\> \\ 
          $\mathsf{if (bal_1+a_1 \leq MIN)}$ \>\> \\
          \>\boldmath$\mathsf{bal_2\asgn 0;}$ \>\\
          \>\>\\
          \boldmath$\mathsf{bal_3\asgn \phi(bal_2);}$ \>\\
          \boldmath$\mathsf{assert(bal_3 \geq MIN);}$ 
        \end{tabbing}
    \end{minipage}};
    \draw[gray]
    (-1,3.5\baselineskip)--(1.75,3.5\baselineskip)node[right]{$\{\true\}$};
    \draw[gray]
    (-1,2.5\baselineskip)--(1.75,2.5\baselineskip)node[right]
         {$\{{\sf bal}_1\leq \mathsf{MIN} \}$};
    \draw[gray]
    (-1,1.5\baselineskip)--(1.75,1.5\baselineskip)node[right]
         {$\{{\sf bal}_1+{\sf a}_1\leq \mathsf{MIN}  \}$};
    \draw[gray]
    (-1,0.5\baselineskip)--(1.75,0.5\baselineskip)node[right]
         {$\{{\sf bal}_1+{\sf a}_1\leq \mathsf{MIN}  \}$};
    \draw[gray]
    (-1,-1\baselineskip)--(1.75,-1\baselineskip)node[right]
         {\scriptsize$\left\{
           \begin{array}{ll}({\sf bal}_1+{\sf a}_1\leq
             \mathsf{MIN})\cr\wedge({\sf bal}_2=0)\end{array}\right\}$};
    \draw[gray]
    (-1,-2.5\baselineskip)--(1.75,-2.5\baselineskip)node[right]{$\{{\sf bal}_3=0\}$};
    \draw[gray]
    (-1,-3.5\baselineskip)--(1.75,-3.5\baselineskip)node[right]{$\{\false\}$};
  \end{tikzpicture}
}
\caption{Slicing sequential trace with Error Invariants}
\vspace{-1.7em}
\end{figure}

We present a (modular) extension to the encoding 
defined in Section~\ref{sec:error_invariants} that enables the inclusion
of control dependencies. Unlike prior work \cite{Christ:flowSensitive},
which addresses this problem using
a custom-tailored control-sensitive encoding, our
technique is based on the SSA representation.
As in Section~\ref{sec:error_invariants},
the starting point of our approach is a straight-line
representation of the error trace $P$. Unlike before,
however, we include the $\phi$-nodes from the
SSA presentation of the program in $P$:
\begin{description}
\item[$\phi$-functions] at $n\in N_t$ for a variable {\sf x}, 
  take as parameters the subscripted variable versions
  representing definitions of {\sf x} in thread $t$ that reach $n$.
\end{description}
Consequently, when generating the straight-line presentation of $P$,
we include all $\phi$-nodes of the SSA presentation of the program
that are traversed by $P$. As we are encoding a single path $P$, 
however, $\phi$ takes only one parameter, since only one definition of each variable {\sf x} reaches $n$ in $P$.
Our extension $\fsenc{\stmts}$
of the encoding $\enc(\stmts)$ is based on 
assignments ${\sf x}_i\asgn\phi({\sf x}_j)$, which make
control dependencies in an error trace $P$ explicit. In order for
${\sf x}_i$ to take the value of ${\sf x}_j$, the outcomes of the
conditional statements preceding the assignment of ${\sf x}_j$ in
$P$ have to permit the assignment to be executed. 

Let $\stmt(n_j)$
be the statement assigning ${\sf x}_j$, and note that
control dependency coincides with our notion of a scope (as defined
in Definition \ref{def:scope}).
We define
\begin{equation}
    \guard(n_j)\defn\bigwedge\left\{
    \enc(n_i)\,\vert\,n_j \text{~is in scope of~}n_i
    \right\}\,.
\end{equation}

In order for the definition of ${\sf x}_j$ in $n_j$ to be reachable
along $P$, $\guard(n_j)$ needs to evaluate to \true.
Moreover, since trace $P$ does not traverse alternative branches,
the value of ${\sf x}_i$ is unknown if $\guard(n_j)$ does not hold.
Based on this insight, we define a \emph{control-sensitive} encoding
$\fsenc{P}$ as follows:
\begin{multline}
  \fsenc{n_i}\defn
  \left\{
  \begin{array}{ll}
    \guard(n_j)\Rightarrow({\sf x}_i={\sf x}_j) &
    \text{if}~\stmt(n_i)~\text{is}~{\sf x}_i\asgn\phi({\sf x}_j)\\
    & \quad\text{and}~n_j~\text{assigns}~{\sf x}_j\\
    \enc(n_i) & \text{if}~n_i~\text{is an assignment}\\
    \true & \text{if}~n_i~\text{is a condition}\\
  \end{array}
  \right.
  \label{eq:encode_phi}
\end{multline}

An inductive error invariant for the encoding 
$\fsenc{P}$ induces a control-sensitive slice
(cf.~Definition 4 of flow-sensitivity
and Theorem 6 in \cite{Christ:flowSensitive}):

\begin{definition}[Control-sensitive Slice]
  Let $P$ be an error trace
  for the assertion $\psi$. A (sound) slice $Q$ is 
  \emph{control-sensitive} if for every
  statement $\stmts_Q[k]=\stmts_P[i]$ and every 
  assumption $\stmts_P[j]$ such that 
  $\stmts_P[i]$ is in scope of $\stmts_P[j]$, 
  there is some prefix $\stmts_Q[1,h]$ of $\stmts_Q[1,k]$ (with $h<k$
  such that $\stmts_Q[h]$ precedes and $\stmts_Q[h+1]$
  succeeds or equals $\stmts_P[j]$
  in $P$) such that $\stmts_Q[1,h]$
  is an error trace for $\neg(\stmts_P[j])$.
  \label{def:fs_slice}
\end{definition}

Intuitively, the definition requires that $Q$
justifies that every branch containing a relevant statement
will be taken.

\begin{theorem}
  Let $P$ be a (concurrent) error trace for $\psi$ of length $k$ and let
  $I_0$, $I_1$ , \ldots ,$I_{k-1}$, $I_{k+1}$
  be error invariants (with $I_0=\true$ and
  $I_{k+1}=\false$) obtained from
  an inductive sequence interpolant
  for $\fsenc{n_1},\ldots,\fsenc{n_k},\psi$. Let $Q$ be the slice obtained
  from $P$ by removing each sub-path $P[i,j]$
  for which $I_{i-1}$ is inductive. Then $Q$
  is a sound control-sensitive slice for $P$.
  \label{thm:conc_loc}
\end{theorem}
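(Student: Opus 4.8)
The plan is to prove the two parts of the claim separately: (i) $Q$ is a \emph{sound} slice, i.e., $Q$ is again an error trace for $\psi$; and (ii) $Q$ is \emph{control-sensitive} in the sense of Definition~\ref{def:fs_slice}. For (i) I would rely essentially on the argument already sketched in Section~\ref{sec:error_invariants}: if $I_{i-1}$ is an error invariant that is inductive with respect to the sub-path $\fsenc{n_i},\ldots,\fsenc{n_j}$ (equivalently, $I_{i-1}$ is recurring for positions $i-1$ and $j$), then replacing $P[i,j]$ by the edge $\langle n_{i-1},n_j\rangle$ preserves both conditions of Definition~\ref{def:err_inv}: the states reachable along the shortened prefix still lie in $\states(I_{i-1})\subseteq\states(I_j)$, and from any state in $\states(I_j)$ the remaining suffix still drives $\psi$ to false because $I_k\wedge\psi$ is unsatisfiable and the interpolant sequence is inductive. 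Iterating this over all removed sub-paths (they are disjoint by construction, so the removals do not interfere) yields that $Q$ is feasible and always falsifies $\psi$, hence an error trace for $\psi$; one should also check that the $\phi$-nodes retained in $Q$ still reference definitions that survive in $Q$, which holds because a $\phi$-assignment ${\sf x}_i\asgn\phi({\sf x}_j)$ can only be sliced together with $n_j$, given that $\fsenc{n_i}$ mentions ${\sf x}_j$ and hence an interpolant separating before-$n_j$ from after-$n_i$ cannot be recurring across $n_j$ while dropping it.

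For (ii) the key observation is that the control-sensitive encoding $\fsenc{\cdot}$ is designed precisely so that the $\phi$-nodes expose the guards. Fix a statement $\stmts_Q[k]=\stmts_P[i]$ retained in $Q$ and an assumption $\stmts_P[j]$ such that $\stmts_P[i]$ is in scope of $\stmts_P[j]$. Because $\stmts_P[i]$ lies in a branch governed by the condition at $n_j$, along $P$ there is a $\phi$-node $n_p$ (a merge point closing that branch, or the read of the variable defined under that branch) whose encoding $\fsenc{n_p}$ contains the conjunct $\enc(n_j)$ inside $\guard(\cdot)$. I would argue that this $\phi$-node cannot be sliced away without also slicing $\stmts_P[i]$: removing $n_p$ while keeping $\stmts_P[i]$ would require a recurring interpolant straddling $n_p$, but such an interpolant is forced to ``know'' the guard $\enc(n_j)$ in order to pin down the value flowing through the $\phi$, and by the variable-range condition of Definition~\ref{def:ind_itp} it would have to propagate backwards past $n_j$, contradicting recurrence across $n_j$. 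Hence there is a prefix $\stmts_Q[1,h]$ of $Q$ ending just after $\stmts_P[j]$ in which the encoding still entails $\guard$, and one shows that $\stmts_Q[1,h]$ together with $\neg(\stmts_P[j])$ is unsatisfiable — i.e., $\stmts_Q[1,h]$ is an error trace for $\neg(\stmts_P[j])$ — because the inductive interpolant $I_h$ at that position implies the guard $\enc(n_j)$, which is exactly $\stmts_P[j]$.

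The main obstacle, I expect, is making precise the claim in part (ii) that ``a retained in-scope statement forces retention of the corresponding guard-carrying $\phi$-node,'' and in particular that the interpolant available at position $h$ actually entails $\stmts_P[j]$. This requires a careful bookkeeping argument about which $\phi$-nodes along $P$ carry which guards (using Definition~\ref{def:scope} and the structure of $\guard(\cdot)$), together with the variable-range discipline of inductive interpolant sequences to show that the relevant guard information cannot be ``forgotten'' and then ``re-derived'' across the conditional. A clean way to organize this is to prove a lemma stating: for every retained statement $\stmts_P[i]$ in scope of $\stmts_P[j]$, the $\phi$-node $n_p$ witnessing the innermost enclosing branch is also retained, and the interpolant immediately preceding $n_p$ implies $\guard(n_p)$, hence in particular $\enc(n_j)=\stmts_P[j]$. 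Everything else — disjointness of removed sub-paths, soundness of each individual removal, and the reduction of control-sensitivity to this lemma — is then routine, essentially mirroring Theorem~6 of~\cite{Christ:flowSensitive} with $\fsenc{\cdot}$ in place of their custom encoding, and the concurrency generality of the statement plays no additional role here since $\fsenc{\cdot}$ treats a concurrent straight-line path exactly like a sequential one.
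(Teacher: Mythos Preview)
Your control-sensitivity argument is broadly aligned with the paper's: both track how a relevant retained statement forces the associated $\phi$-node to be relevant, and both use the variable-range condition of interpolants to conclude that the interpolant just before the $\phi$-node must imply the guard, hence $\stmts_P[j]$. The paper organizes this by first recording three observations about how non-redundant variable occurrences propagate through an interpolant sequence (if $\stmts_P[i]$ is relevant then ${\sf x}_i$ occurs in $J_{i+1}$; if ${\sf x}_i$ drops out between $J_j$ and $J_{j+1}$ then $\stmts_P[j]$ is relevant; a version cannot persist past a redefinition by the shared-vocabulary condition), and then chases ${\sf x}_i$ forward to the arbiter node $n_l$ to obtain $J_l\Rightarrow\guard(n_j)$, and finally backward to the prefix. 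Your proposed lemma (``a retained in-scope statement forces retention of the guard-carrying $\phi$-node, and the preceding interpolant implies the guard'') is exactly what this argument establishes, so on this half you are on the same track.

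The gap is in your soundness argument. You appeal directly to the error-invariant slicing argument of Section~\ref{sec:error_invariants}, but that argument is for interpolants of $\enc(P)$, whereas here the inductive interpolant sequence is computed for $\fsenc{P}$. The two encodings differ non-trivially: $\fsenc{n_i}=\true$ for conditions, and $\fsenc{n_i}=\guard(n_j)\Rightarrow({\sf x}_i={\sf x}_j)$ for $\phi$-nodes. So an inductive interpolant sequence for $\fsenc{P}$ is not a priori one for $\enc(P)$, and hence the $I_i$ are not a priori error invariants in the semantic sense of Definition~\ref{def:err_inv}; your claims ``reachable states lie in $\states(I_{i-1})$'' and ``states in $\states(I_j)$ still drive $\psi$ to false'' are about actual executions, not about $\fsenc$-satisfying assignments. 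The paper closes this gap by an induction over $P$ showing that the $\fsenc$-interpolants $J_0,\ldots,J_{k+1}$ are also $\enc$-interpolants: the condition case is immediate ($J_i\Rightarrow J_{i+1}$ already, so conjoining $\enc(n_i)$ only helps), but the $\phi$-node case requires a genuine case split: either $J_i\Rightarrow\guard(n_j)$, in which case the guarded equality behaves like the plain equality $\enc(n_i)$; or $J_i\not\Rightarrow\guard(n_j)$, in which case $J_{i+1}$ cannot depend on ${\sf x}_i$ and $J_i\Rightarrow J_{i+1}$ outright. Without this bridging step your iterated-removal argument does not yet yield a sound slice.
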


Note that the interpolants in Theorem~\ref{thm:conc_loc} may
contain different versions of a variable {\sf x}, since
the encoding of $\phi$-nodes may refer to conditions in the ``past''.
This corresponds to \emph{history} or \emph{ghost} variables used
in Hoare logic and does not affect soundness.

\begin{example}
  Figure \ref{fig:cs_sensitive} shows the path $P$ from
  Example \ref{ex:cf_insensitive} sliced using a control-sensitive
  encoding $\fsenc{P}$ based on $\phi$-nodes. Note that
  the statements initializing {\sf bal} and {\sf amount},
  which guarantee that the \yes-branch is taken,
  are included in the slice.
\end{example}

\paragraph{Synchronization}
\label{sec:sync}
In the simple interleaving semantics deployed in this paper, locks
can be modeled using an integer variables $\ell$ and atomicity
constraints. Lock $\ell$ is available if its value is $0$.
Any other value $t$ indicates that the lock $\ell$ is held by thread
$t$. Let $n$ be a node of thread $t$ with a self-loop waiting for $(\ell=0)$
to become true, and $m$ its successor node assigning $t$ to $\ell$.
By constraining the execution such that 
no thread other than $t$ can execute between $n$ and $m$, we
guarantee that lock acquisition is performed atomically. Analogously,
a lock $\ell$ held by the current thread (guaranteed by condition $\ell=t$)
is released by the statement $\ell\asgn 0$.
Control-sensitive slices also take into account lock acquisition
statements, as relevant statements executed in a locked region
are in the scope of the corresponding condition $(\ell=0)$.

\paragraph{Hazards}
\label{subsec:data_dep}
A trace contains a \emph{data hazard} if its outcome depends on the
sequence (or timing) of concurrently executed statements. As
explained for the sub-trace in Figure \ref{fig:errinv_haz}
discussed in Section \ref{sec:introduction}, applying error
invariants in their original form \cite{ErmisSW12} to sequential
paths results in slices that ignore important
characteristics of concurrent traces. While $\fsenc{P}$ reflects
control-flow, it fails to capture \emph{data dependencies},
which are constraints arising from the flow of
data between statements \cite{muchnick}:
\begin{description}
\item[Read-after-write] 
  If statement $\stmt(n)$ writes a value read by
  statement $\stmt(m)$, then the two statements are 
  \emph{flow dependent}. 
\item[Write-after-read] An \emph{anti dependence} occurs when
  statement $\stmt(n)$ reads a value that is later
  updated (over-written) by $stmt(m)$.
\item[Write-after-write] An \emph{output dependence} exists
  if $\stmt(n)$ as well as $\stmt(m)$
  set the value of the same variable.
\end{description}

While this definition also applies to single threads, 
we concern ourselves exclusively with
\emph{inter-thread} data depen\-dencies. In a path $P$,
a data dependency
between different threads can indicate a conflicting
access (i.e., a \emph{race condition} or \emph{hazard}).

Unlike flow depen\-dence (which is taken into account by
$\enc(P)$ and $\fsenc{P}$, since the SSA form represents
use-definition pairs and therefore also flow dependence explicitly),
anti and output dependencies are not explicit in
the SSA-based encoding of $P$ used in
Sections \ref{sec:error_invariants} and
\ref{subsec:control_sensitive}. Similar to merge points in
sequential programs, inter-thread dependencies in $P$ give rise
to conflicting definitions of global variables.
The Concurrent SSA (CSSA) form of paths
presented in \cite{Wang2011SPA,SinhaW11}
introduces $\pi$-functions to resolve
dependencies between accesses to global variables
in different threads. 

To convert an error trace into CSSA form,
we introduce an
arbiter node before every read access to a global variable {\sf x} in
an error trace $P$ (analogously to the arbiter nodes for $\phi$-functions
in Section \ref{sec:preliminaries}). 
The arbiter node is annotated with a $\pi$-function that selects from all
definitions of the global variable {\sf x} in $P$ the most recent
definition:
\begin{description}
\item[$\pi$-functions] at $n\in N_t$ for a global variable {\sf x},
  take as parameters the subscripted variables representing
  definitions of {\sf x} in all threads.\footnote{
    As an optimization, only the \emph{last} definition
    of {\sf x} in thread $t$ before $n$ is added.
  }
\end{description}
\begin{figure}[t]
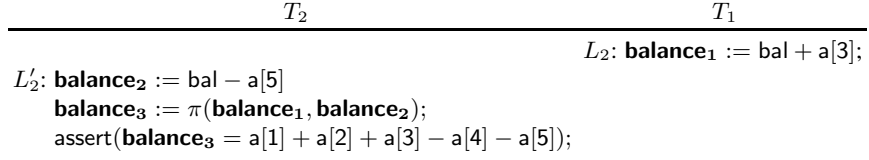
\centering
\begin{minipage}[t]{\textwidth}
\vspace{0pt}
\centering
{
\small
\begin{tabular}{ll@{\hspace{.45em}}ll}
\multicolumn{2}{c}{$T_2$} & \multicolumn{2}{c}{$T_1$}\\
\toprule
&& $L_2$: & $\boldsymbol{\mathsf{balance}_1}:=\mathsf{bal+\arrayvar[3]}$;\\
$L'_2$: & $\boldsymbol{\mathsf{balance}_2}:=\mathsf{bal-\arrayvar[5]}$\\
&${\boldsymbol{\mathsf{balance}_3}:=\pi(\boldsymbol{\mathsf{balance}_1},\boldsymbol{\mathsf{balance}_2});}$\\
&$\mathsf{assert (\boldsymbol{\mathsf{balance}_3}=a[1]+a[2]+a[3]-a[4]-a[5]);}$
\end{tabular}
}
\caption{Part of a path with hazard and $\pi$-node\label{fig:pi}}
\vspace{-1.7em}
\end{minipage}
\end{figure}

Figure~\ref{fig:pi} shows a simplified suffix of the trace in
Figure~\ref{fig:errinv_haz}. The simplified trace consists of two
threads with a $\pi$-node (arbitrating between the definitions ${\sf
  balance}_1$ and ${\sf balance}_2$) inserted before an assertion
$\psi$ that states the expected outcome.  Note that unlike the
degenerate $\phi$-functions used in Section
\ref{subsec:control_sensitive}, a $\pi$-function for {\sf x} has as
many parameters as there are definitions of {\sf x} in $P$.

To encode WAR and WAW dependencies,
we introduce an irreflexive, transitive, and anti-symmetric
relation $\hb{n_i}{n_j}$ which indicates that node $n_i$
is executed before node $n_j$. This happens-before relation
enables us to encode the edges of a program trace, reflecting
the program order and the schedule.

In addition, $\rdvar{{\sf x}}{n_i}$ and $\wrvar{{\sf x}}{n_j}$
indicate that {\sf x} is read at node $n_i$ and written at node $n_j$.
These primitives allow for an explicit encoding of data dependencies:
\begin{equation}
  \begin{array}{rcl}
    \wrvar{{\sf x}}{n_i}\wedge\hb{n_i}{n_j}\wedge\rdvar{{\sf x}}{n_j}
    & \Leftrightarrow & \raw{\sf x}{n_i}{n_j} \\
    \rdvar{{\sf x}}{n_i}\wedge\hb{n_i}{n_j}\wedge\wrvar{{\sf x}}{n_j}
    & \Leftrightarrow & \war{\sf x}{n_i}{n_j} \\
    \wrvar{{\sf x}}{n_i}\wedge\hb{n_i}{n_j}\wedge\wrvar{{\sf x}}{n_j}
    & \Leftrightarrow & \waw{\sf x}{n_i}{n_j}
  \end{array}
\end{equation}

The hazard-sensitive encoding presented below incorporates
data dependen\-cies into the encoding of a trace. The encoding is derived
directly from a program path $P$, taking advantage of the information
encoded in the edges. Assignments (without $\pi$-functions) are encoded
as follows:
\begin{multline}
  \hsenc{n_i}\defn
  \left\{
  \begin{array}{ll}
    \wrvar{\sf x}{n_i}\wedge\enc(n_i) &
    \text{if}~n_i~\text{writes global var.~{\sf x}}\\
    \rdvar{\sf x}{n_i}\wedge\enc(n_i) &
    \text{if}~n_i~\text{reads global var.~{\sf x}}\\
    \enc(n_i) & \text{otherwise}
  \end{array}
  \right.
\end{multline}

Nodes $n_i$ with $\pi$-functions incorporate happens-before
information. Let $n_i$ be a $\pi$-node assigning ${\sf x}_i$,
let $n_j$ be an assignment to ${\sf x}_j$ and 
the last node before $n_i$ in $P$ updating
the global variable {\sf x}. Then $\hsenc{n_i}$ is:
\begin{equation}
  \rdvar{\sf x}{n_i}\wedge
  \left(\mathrm{DEP}(n_i, n_j) \Rightarrow ({\sf x}_i={\sf x}_j)\right)
  \label{eq:pi-encoding1}
\end{equation}
where $\mathrm{DEP}(n_i,n_j)$ is the following condition:
\begin{equation}
    \raw{\sf x}{n_j}{n_i}\wedge
    \bigwedge_{\footnotesize\begin{array}{c}m\in\{n\in P\,\vert\,
      \wrvar{\sf x}{n}\}\cr m\neq
        n_j\end{array}}
    (\waw{\sf x}{m}{n_j}\vee\war{\sf x}{n_i}{m})
  \label{eq:pi-encoding2}
\end{equation}

Intuitively, $\mathrm{DEP}(n_i,n_j)$ states that
${\sf x}_j$ is written before ${\sf x}_i$ is read,
and no other definition of {\sf x} interferes.

Finally, edges are encoded as happens-before relations:
\begin{equation}
  \hsenc{\langle n_i,n_{i+1}\rangle}\defn\hb{n_i}{n_{i+1}}
\end{equation}

Given a path $P\defn n_1,\langle n_1,n_2\rangle, n_2,
\ldots, \langle n_{k-1},n_k\rangle, n_k$, applying sequence interpolation
to the formulas
$\hsenc{n_1}$, $\hsenc{\langle n_1,n_2\rangle}$, $\hsenc{n_2}$, \ldots,
$\hsenc{\langle n_{k-1},n_k\rangle}$, $\hsenc{n_k}$, $\psi$
yields a sequence $\In_1, \Out_1,\ldots,\In_k,\Out_k$ 
of formulas such that
\begin{displaymath}
  \In_i\wedge\hsenc{n_i}\Rightarrow\Out_i ~\text{and}~
  \Out_i\wedge\hsenc{\langle n_i,n_{i+1}\rangle}\Rightarrow\In_{i+1}\,.
\end{displaymath}

Unlike before, $\In_i$ and $\Out_i$ propagate facts about states
as well as execution order. We can slice
sub-path $[n_i,n_j]$ if $\In_i\Rightarrow\Out_j$, sub-path $(n_i,n_j)$ if
$\Out_i\Rightarrow\In_j$, sub-path $[n_i,n_j)$ if
  $\In_i\Rightarrow\In_j$, and sub-path $(n_i,n_j]$ if
$\Out_i\Rightarrow\Out_j$.
The resulting sliced path $Q$ corresponds to a sequence
of statements $\stmts_Q$
and a set of edges $\project{E}{Q}$ representing
context switches and program order constraints relevant to the error.

\begin{definition}[Hazard-sensitive slice]
  Given an error trace $P$, a 
  (sound) slice $Q$ is \emph{hazard-sensitive} if for every
  statement $\stmts_Q[k]=\stmts_P[j]$ and 
  statement $\stmts_P[i]$ such that there is an inter-thread
  data dependency between $\stmts_P[i]$ and $\stmts_P[j]$, there is an $h$
  such that $\stmts_Q[h]=\stmts_P[j]$.
  \label{def:hazard_slice}
\end{definition}

\begin{theorem}
  Let $P$ be a concurrent error trace and let
  $Q$ be the slice obtained
  from $P$ as explained above.
  Then $Q$ is a sound hazard-sensitive slice of $P$.
  \label{thm:hazard_loc}
\end{theorem}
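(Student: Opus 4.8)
The plan is to mirror the soundness argument for the sequential case (Ermis et al.\ and Theorem~\ref{thm:conc_loc}) and then to derive the extra requirement of Definition~\ref{def:hazard_slice} by examining which atoms an inductive interpolant sequence for $\hsenc{P}$ is forced to carry. I split the argument into (1)~$Q$ is a sound slice of $P$ and (2)~$Q$ is hazard-sensitive.

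\emph{Soundness.} First I would show that $\hsenc{P}$ is a \emph{faithful} encoding of $P$: every model assigns the program variables exactly the values produced by executing $P$, and hence falsifies $\psi$. The crucial step is that the edge conjuncts $\hsenc{\langle n_i,n_{i+1}\rangle}=\hb{n_i}{n_{i+1}}$, together with irreflexivity, transitivity, and anti-symmetry of $\hb{\cdot}{\cdot}$, pin $\hb{\cdot}{\cdot}$ down on the nodes of $P$ to be exactly the linear order $n_1<\dots<n_k$; hence $\mathrm{DEP}(n_i,n_j)$ holds in every model iff $n_j$ is the last write to the relevant global variable before the read at $n_i$, and no spurious $\mathrm{DEP}(n_i,n_{j'})$ can fire, so \eqref{eq:pi-encoding1}--\eqref{eq:pi-encoding2} force each $\pi$-value to coincide with its concrete value. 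Since $P$ is an error trace, $\hsenc{P}\wedge\psi$ is therefore unsatisfiable and the inductive sequence interpolant $\In_1,\Out_1,\dots,\In_k,\Out_k$ of the theorem exists. I then take the interpolants at the positions of $P$ that survive in $Q$ and check that they form an inductive interpolant sequence for the sliced encoding: at surviving nodes and edges the local implications are inherited verbatim, and at a cut---say $[n_i,n_j]$ removed because $\In_i\Rightarrow\Out_j$---the fresh edge of $Q$ joining the surviving endpoints contributes a single $\hb{\cdot}{\cdot}$ atom which, by transitivity, re-derives every happens-before fact the removed edges contributed and that $\In_i$ or $\Out_j$ might mention (all such facts relate nodes that still occur in $Q$, since a slice preserves the order of survivors, Definition~\ref{def:sound_slice}). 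The three other cut shapes are symmetric. This yields $\hsenc{Q}\wedge\psi$ unsatisfiable, so every execution of $Q$ violates $\psi$; feasibility of $Q$ is obtained---as in the sequential proof---by restricting a model of $\hsenc{P}$ to $\var(\hsenc{Q})$, the new wrinkle being that one must show no sliced write becomes the $\pi$-selected writer of a surviving read.

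\emph{Hazard-sensitivity.} Let $\stmts_Q[k]=\stmts_P[j]$ survive and let $\stmts_P[i]$ have an inter-thread data dependency with it on a global variable~${\sf x}$. This dependency is witnessed in $\hsenc{P}$ by a $\pi$-node arbitrating the read of~${\sf x}$: its encoding \eqref{eq:pi-encoding1}--\eqref{eq:pi-encoding2} mentions the SSA version written by the other statement together with the atoms $\wrvar{\sf x}{\cdot}$, $\rdvar{\sf x}{\cdot}$ and the happens-before atoms relating the two nodes. Since that $\pi$-node (hence the read) survives, these atoms are live in $\hsenc{Q}$, and I would argue that any cut removing the other statement would be forced to merge two interpolants across a position where one of these atoms is live---contradicting the variable-scoping condition of an inductive interpolant sequence together with the implication ($\In_i\Rightarrow\Out_j$, etc.) that licenses the cut. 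Hence the other statement is retained in $Q$, which is what Definition~\ref{def:hazard_slice} requires.

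\emph{Main obstacle.} The hard part is the bookkeeping around $\pi$-nodes and happens-before atoms when an arbitrary sub-path is sliced: unlike the sequential encodings $\enc$ and $\fsenc$, the guard $\mathrm{DEP}(n_i,n_j)$ ranges over \emph{all} writes to~${\sf x}$ in $P$ and $\hb{\cdot}{\cdot}$ is a single global, transitive relation, so removing a sub-path rewrites these guards and must still re-establish precisely the happens-before facts the surviving interpolants depend on. Making both directions rigorous---the inductiveness of the sliced interpolant sequence (for soundness) and the impossibility of a dependency-breaking cut (for hazard-sensitivity)---hinges on a careful analysis of which $\wrvar{\cdot}{\cdot}$, $\rdvar{\cdot}{\cdot}$ and $\hb{\cdot}{\cdot}$ atoms necessarily appear in the interpolants, and is where the real work lies.
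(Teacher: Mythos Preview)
Your overall decomposition into soundness plus hazard-sensitivity matches the paper, and your soundness discussion is actually more explicit than the paper's sketch (which essentially takes soundness for granted, in the spirit of Theorem~\ref{thm:conc_loc}). However, the route you choose for soundness---rebuilding an inductive interpolant sequence for $\hsenc{Q}$---is harder than necessary, precisely because of the obstacle you flag: removing nodes rewrites every $\mathrm{DEP}$ guard globally. The paper avoids this entirely by (implicitly) projecting the interpolants onto program-state variables and arguing, as in Theorem~\ref{thm:conc_loc}, that they are error invariants for the plain encoding $\enc(P)$; soundness of $Q$ then follows from the sequential argument without ever forming $\hsenc{Q}$.

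For hazard-sensitivity there is a genuine gap in your mechanism. You argue that slicing the partner statement $m$ of a surviving dependency would violate the \emph{variable-scoping} clause of Definition~\ref{def:ind_itp}, because the atom $\wrvar{\sf x}{m}$ is ``live'' across the cut. But variable scoping alone does not block the cut: $\wrvar{\sf x}{m}$ occurs not only in $\hsenc{m}$ but also inside the $\mathrm{DEP}$ premise of \emph{every} $\pi$-node for ${\sf x}$ in $P$, including $\pi$-nodes that may lie \emph{before} $m$. So $\wrvar{\sf x}{m}$ can perfectly well belong to the shared vocabulary on both sides of a cut that removes $m$, and the syntactic interpolant condition is not contradicted. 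The paper's argument is semantic rather than syntactic: $\wrvar{\sf x}{m}$ is \emph{asserted as a positive conjunct} only in $\hsenc{m}$; in every $\pi$-node it appears merely inside the premise of an implication. Hence no interpolant at a position preceding $m$ can \emph{entail} $\wrvar{\sf x}{m}$, whereas the interpolant feeding the surviving $\pi$-node must entail it in order to discharge the $\mathrm{DEP}$ premise of~\eqref{eq:pi-encoding1}--\eqref{eq:pi-encoding2}. That is what rules out any recurrence $\In_a\Rightarrow\Out_b$ across $m$. The paper organises this via a case split on whether the surviving statement is a read of ${\sf x}$ (handled directly through its $\pi$-node) or a write (reduced to a relevant read of ${\sf x}$ elsewhere in $Q$); your argument can be repaired by replacing the appeal to variable scoping with this ``only positively asserted at $m$'' observation and adopting that case split.
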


\begin{example}
  Consider the path in Figure \ref{fig:pi}. A hazard-insensitive
  slice would contain the statement at node $L'_2$
  but not the statement at node $L'_2$ (as explained
  in Section \ref{sec:introduction}) since $L_2$ has no
  influence on the state after $L'_2$. 
  Encoding (\ref{eq:pi-encoding1}) and (\ref{eq:pi-encoding2})
  of the $\pi$-node require the interpolant before
  the $\pi$-node to imply
  $\waw{\sf balance}{L_2}{L'_2}$,
  and consequently $\wrvar{\sf balance}{L_2}$,
  $\wrvar{\sf balance}{L'_2}$, and $\hb{L_2}{L'_2}$
  (as indicated in Figure \ref{fig:errinv_haz}). 
  Nodes $L_2$ and $L'_2$ as well as the edge
  $\langle L_2,L'_2\rangle$ are included in the resulting slice.
\end{example}

\subsection{Fine-Tuning Explanations}
\label{subsec:conc_enc} 
The encodings presented in Section~\ref{subsec:concurrency} can be combined 
in a straightforward manner, providing us with a choice of control
WAR , and WAW dependencies reflected by the resulting explanation.
Control-flow or hazard-sensitivity can be added (or removed) by (dis-)regarding
$\pi$-nodes and $\phi$-nodes in $P$.
Control-flow dependency can be incorporated
into $\pi$-nodes in Equation (\ref{eq:pi-encoding1}) by prefixing
the assignment ${\sf x}_i={\sf x}_j$ with the guard
of the definition of ${\sf x}_j$ at node $n_j$:
 $ \guard(n_j)\Rightarrow
  \left(\mathrm{DEP}(n_i, n_j) \Rightarrow ({\sf x}_i={\sf x}_j)\right)$,
similar to the guard in the definition of $\fsenc{n_i}$
in Encoding (\ref{eq:encode_phi}). Moreover,
Encoding (\ref{eq:pi-encoding1}) can be made
insensitive to WAR dependencies by restricting $m$ to
predecessors of $n_i$ and by dropping the disjunct $\war{\sf x}{n_i}{m}$
from (\ref{eq:pi-encoding2}) (and similarly for WAW dependencies).
Note that flow dependency has a special role, since 
use-definition chains are explicit in the SSA representation.

\noindent\parbox{.6\textwidth}{
The partial order given by the subset relation $\subseteq$
over the power-set of the remaining dependencies
$\{\sf{cs}, \sf{war}, \sf{waw}\}$
reflects possible levels of detail of explanations, as illustrated
by the Hasse diagram to the right.
As indicated in the diagram, the configuration
$\emptyset$ corresponds to the basic approach presented in
\cite{ErmisSW12,Murali:hybrid}, whereas $\{\sf{cs}\}$
represents control-flow sensitive approach.
}
\raisebox{-3.5em}{
  \begin{tikzpicture}[scale=.8,every node/.style={scale=0.8}]
    \node (n0) at (3,3) {\small $\{{\sf cs}, {\sf war}, {\sf waw}\}$};
    \node (n1) at (1,2) {\small $\{{\sf cs}, {\sf war}\}$};
    \node (n2) at (3,2) {\small $\{{\sf cs}, {\sf waw}\}$};
    \node (n3) at (5,2) {\small $\{{\sf war}, {\sf waw}\}$};
    \draw (n0)--(n1);
    \draw (n0)--(n2);
    \draw (n0)--(n3);
    \node (n5) at (1,1) {\small $\{{\sf cs}\}$};
    \node (n6) at (3,1) {\small $\{{\sf war}\}$};
    \node (n7) at (5,1) {\small $\{{\sf waw}\}$};
    \draw (n1)--(n5);
    \draw (n2)--(n5);
    \draw (n1)--(n6);
    \draw (n3)--(n6);
    \draw (n2)--(n7);
    \draw (n3)--(n7);
    \node (n11) at (3,0) {\small $\emptyset$};
    \draw (n11)--(n5);
    \draw (n11)--(n6);
    \draw (n11)--(n7);
    \node (n12) at (3,-.35)
          {\scriptsize\cite{ErmisSW12,Murali:hybrid}};
  \end{tikzpicture}
}

While we see interpolants as an inherent part of the explanation, the level
of detail provided by these annotations cannot be related or
formalized as easily as it is the case for dependencies: changing
the underlying encoding typically has an unpredictable effect
on the structure and strength of interpolants~\cite{dpwk2010,impact}.
%

\section{Experiments}
\label{sec:experiments}
We implemented our approach as an extension of the directed
testing tool {\ConCrest}~\cite{FarzanHRV13}.
We generate error traces of concurrent programs and then
produce slices as described in Section~\ref{sec:fault-localization-paths}. 
While all slices provided by our tool are sound in the sense of
Definition~\ref{def:sound_slice}, the level of detail might not
be sufficient to reflect the underlying bug: for example,
the hazard-sensitive slice for the \textsf{account} benchmark readily
reveals the atomicity violation. Therefore, it is not necessary to
compute a more detailed control-sensitive slice.

The results from
Section~\ref{subsec:conc_enc} enable the developer to
gradually increase the detail in an iterative manner until
the bug can be understood.
This section provides an empirical evaluation of the
size and accuracy of slices with varying levels of
detail. %

\paragraph{Effectiveness of the Method}
To evaluate our method, we applied it on a collection of faulty C
programs to show how effective the different dependency encodings are
at revealing different types of concurrency bugs. We used four
different encodings to track data and control dependencies:
\textbf{hs} refers to hazard-sensitive encoding for tracking inter-thread data
dependencies, \textbf{cs} refers to
control-sensitive encoding for tracking control dependencies, and \textbf{ds} denotes
the basic encoding $\enc_P$ of Section
\ref{sec:fault-localization-paths}.
The symbol ``+'' indicates combinations of encodings.

Our definition of whether the bug was captured depends on the type of bug.
For data race bugs, we required that the slice reflecting the bug
contains both conflicting accesses.  For atomicity violations, a slice
reflecting the bug contains conflicting statements from another thread
interrupting the desired atomic region.  For order violations, a slice
reflecting the bug contains conflicting statements in the problematic
order. 

Table~\ref{tbl:benchmarks} summarizes our empirical results.  
The benchmarks in this table are
classified into two groups.  
The first group consists of 33 multithreaded C programs taken
from~\cite{Khoshnood2015}.\footnote{%
  {\ConCrest}'s search heuristic failed to generate an error trace for the  
  \textsf{fibbench\_longer}, a variant of \textsf{fibbench} with
  larger parameters. We emphasize that this failure is related
  to the generation of traces rather than slicing.}
These programs capture the essence of concurrency bugs reported in
various versions of open source applications such as Mozilla, Apache,
and GCC.  The \textsf{apache2} and \textsf{bluetooth} benchmarks in
the second group are simplified versions of applications taken from
\cite{FarzanHRV13}.  The \textsf{pool-simple-2} benchmark is a lock-free
concurrent data structure with a linearizability bug. We discuss this
benchmark in depth in App.~\ref{sec:case_study_pool}.
The remaining two benchmarks in the second group are variants of the 
program discussed in Section~\ref{sec:introduction}.
For each benchmark program, the name, the number of lines of code
(LOC), the number of threads, and the type of bug are listed in
Table~\ref{tbl:benchmarks}.
The number of error traces (\#T) per benchmark varies due to specific 
assertions and {\ConCrest}'s
ability to produce error traces. They do not reflect any
preselection of traces. In total, {\ConCrest} generated 90 error
traces from the 38 programs all of which we considered in our
evaluation.

We use $\checkmark$ to indicate that the explanations obtained using the
corresponding encoding capture the bug, and -- if the bug was not captured.
By manually inspecting the slices we found that for all but two
benchmarks, tracking all dependencies \textbf{ds+cs+hs} yields
explanations that capture the corresponding concurrency bug.
For most benchmarks there exists at least one additional encoding
which provides smaller slices that still reveal the
bug. This encoding is usually \textbf{hs} (68\%) or \textbf{cs}
(50\%) depending on the nature of the bug and the assertions.
Interestingly, our analysis revealed that \textsf{boop},
\textsf{freebsd\_auditarg} and \textsf{gcc-java-25530}
from~\cite{Khoshnood2015} contain sequential bugs already reflected
in a \textbf{ds}-slice rather than
concurrency bugs (even though in~\cite{Khoshnood2015} they are classified
as concurrency bugs).

In two of the three error traces of \textsf{freebsd\_auditarg}
the bug is triggered by non-interleaved executions of the
threads. For these traces, any 
encoding yields an adequate explanation. In one error trace,
however, the bug is triggered by an interference between two threads,
which is only reflected by the encodings \textbf{ds+hs} and \textbf{ds+cs+hs}.

Only the programs \textsf{hash\_table}, \textsf{ms\_queue02}, and
\textsf{list\_seq}, which contain bugs in intricate
concurrent data structures, 
require the full \textbf{ds+cs+hs} encoding. 

Only for the two benchmarks \textsf{apache-25520} and
\textsf{cherokee\_01} the slices produced by our method failed
to reveal the bugs. The problem is that 
the root cause of the assertion violation is that a specific branch of
a conditional statement is not taken during the execution. 
Slices of single error traces cannot reveal the non-occurrence of an event as
the cause for failure. Therefore, we plan to analyze merged
error traces in future work.

\paragraph{Running times} 
The generation of the slices takes an
average of 2.43s ($\sigma=11.02s$) across all encodings
with a maximum of 168.8s. As expected, the running times increase with
the amount of detail captured by the encoding. Generating a
\textbf{ds} explanation takes 0.43s on average ($\sigma=0.18s$)
whereas a \textbf{ds+cs+hs} explanation takes 7.3s ($\sigma=21.25s$).

\paragraph{Quantitative Evaluation}
Table~\ref{tbl:benchmarks} shows the effect of tracking different
dependencies on the size of the slices.  $\mu$ refers to average
percentage reduction as the quotient of the number of remaining and
original instructions, so smaller numbers mean smaller slices.  As
expected, increasing the sensitivity of the algorithm by tracking more
dependencies leads to smaller reductions.
However, as we have seen previously, the hazard-sensitive
explanations (\textbf{ds+hs}), which capture the concurrency bugs in
68\% of the benchmarks, on average contain 35\% of the original
instructions and 54\% of the original variables.  We gained the
maximum reduction with the encoding
(\textbf{ds}), however the resulting explanations reflected the
concurrency bugs in only 23\% of the benchmarks.  The amount of
reduction differs across benchmarks with a maximum of 93\% for
the \textsf{apache2} benchmark program.
Slices which are hazard- but not control-flow
sensitive tend to be much smaller than slices which are control-flow
sensitive, but not data-hazard sensitive.
\setlength{\textfloatsep}{0.1cm}
\newcolumntype{C}{>{\centering\arraybackslash}X}
\newcolumntype{R}{>{\raggedleft\arraybackslash}X}
\newcolumntype{L}{>{\raggedright\arraybackslash}X}
\begin{table}[t!]
\centering
\tiny

\newcommand{\myspace}{\hspace{.5em}}
 \begin{tabularx}{\textwidth}{X|r|r@{\hspace{3pt}}r|r|c|c|rrrrc|rrrrc|rrrrc|rrrrc}

  	\toprule
		\textbf{Benchmark} & 
		\textbf{\#T} & 
		\multicolumn{2}{c|}{LOC} & 
		\multicolumn{1}{c|}{AIT} & 
		Threads & 
		Bugs & 
		\multicolumn{5}{c}{\textbf{ds+cs+hs}} & 
		\multicolumn{5}{c}{\textbf{ds+hs}} &
		\multicolumn{5}{c}{\textbf{ds+cs}} & 
		\multicolumn{5}{c}{\textbf{ds}} \\
		
  	& 
  	& 
  	& 
  	& 
  	& 
  	& 
  	& 
  	\multicolumn{2}{@{}c@{}}{S[\%]} &  
  	\multicolumn{2}{@{}c@{}}{V[\%]} & 
  	&
  	\multicolumn{2}{@{}c@{}}{S[\%]} &  
  	\multicolumn{2}{@{}c@{}}{V[\%]} & 
  	&
  	\multicolumn{2}{@{}c@{}}{S[\%]} &  
  	\multicolumn{2}{@{}c@{}}{V[\%]} & 
  	&
  	\multicolumn{2}{@{}c@{}}{S[\%]} &  
  	\multicolumn{2}{@{}c@{}}{V[\%]} &
  	\\
  	
  	& 
  	& 
  	& 
  	& 
  	& 
  	& 
  	& 
  	\multicolumn{1}{c}{$\mu$} & 
  	\multicolumn{1}{c}{$\sigma$} & 
  	\multicolumn{1}{c}{$\mu$} & 
  	\multicolumn{1}{c}{$\sigma$} & 
  	\multicolumn{1}{c|}{RB} &
  	\multicolumn{1}{c}{$\mu$} & 
  	\multicolumn{1}{c}{$\sigma$} & 
  	\multicolumn{1}{c}{$\mu$} & 
  	\multicolumn{1}{c}{$\sigma$} & 
  	\multicolumn{1}{c|}{RB} &
  	\multicolumn{1}{c}{$\mu$} & 
  	\multicolumn{1}{c}{$\sigma$} & 
  	\multicolumn{1}{c}{$\mu$} & 
  	\multicolumn{1}{c}{$\sigma$} & 
  	\multicolumn{1}{c|}{RB} &
  	\multicolumn{1}{c}{$\mu$} & 
  	\multicolumn{1}{c}{$\sigma$} & 
  	\multicolumn{1}{c}{$\mu$} & 
  	\multicolumn{1}{c}{$\sigma$} &
  	RB\\
  	\midrule

account                   & 3  & 43  & (58)  & 51.7  & 4 & AV & \textbf{62}  & \textbf{12} & \textbf{77}  & \textbf{6} & $\checkmark$ & \textbf{42}  & \textbf{10} & \textbf{68}  & \textbf{6} & $\checkmark$ & 43          & 8           & 68           & 6           & --           & 29          & 5          & 59           & 5           & -- \\

apache-21287              & 2  & 30  & (79)  & 43    & 3 & AV & \textbf{72}  & \textbf{0}  & \textbf{87}  & \textbf{0} & $\checkmark$ & 28           & 0           & 53           & 0          & --           & \textbf{51} & \textbf{0}  & \textbf{87}  & \textbf{0}  & $\checkmark$ & 9           & 0          & 40           & 0           & -- \\

apache-25520              & 1  & 88  & (192) & 34    & 3 & AV & 38           &  --         & 50           & --         & --           & 9            &  --         & 33           & --         & --           & 26          &  --         & 50           & --          & --                 & 9           &  --        & 33           & --          & --                 \\

barrier\_vf\_false        & 12 & 57  & (85)  & 27    & 4 & AV & \textbf{70}  & \textbf{0}  & \textbf{80}  & \textbf{0} & $\checkmark$ & 19           & 0           & 40           & 0          & --           & \textbf{67} & \textbf{0}  & \textbf{80}  & \textbf{0}  & $\checkmark$       & 15          & 0          & 40           & 0           & --                 \\

boop                      & 1  & 58  & (98)  & 40    & 3 & SB & \textbf{38}  &  --         & \textbf{47}  & --         & $\checkmark$ & \textbf{30}  &  --         & \textbf{40}  & --         & $\checkmark$ & \textbf{35} &  --         & \textbf{47}  & --          & $\checkmark$       & \textbf{28} &  --        & \textbf{40}  & --          & $\checkmark$       \\

cherokee\_01              & 1  & 88  & (188) & 28    & 3 & AV & 46           &  --         & 60           & --         & --           & 11           &  --         & 40           & --         & --           & 32          &  --         & 60           & --          & --                 & 11          &  --        & 40           & --          & --                 \\

counter\_seq              & 1  & 28  & (41)  & 29    & 3 & DR & \textbf{72}  &  --         & \textbf{90}  & --         & $\checkmark$ & \textbf{38}  &  --         & \textbf{70}  & --         & $\checkmark$ & 52          &  --         & 80           & --          & --                 & 31          &  --        & 60           & --          & --                 \\

fibbench                  & 2  & 34  & (47)  & 34    & 3 & AV & \textbf{94}  & \textbf{3}  & \textbf{97}  & \textbf{3} & $\checkmark$ & \textbf{94}  & \textbf{3}  & \textbf{97}  & \textbf{3} & $\checkmark$ & \textbf{88} & \textbf{3}  & \textbf{97}  & \textbf{3}  & $\checkmark$       & \textbf{88} & \textbf{3} & \textbf{97}  & \textbf{3}  & $\checkmark$       \\

freebsd\_auditarg         & 3  & 52  & (104) & 37    & 4 & SB & \textbf{67}  & \textbf{7}  & \textbf{86}  & \textbf{0} & $\checkmark$ & \textbf{32}  & \textbf{5}  & \textbf{64}  & \textbf{0} & $\checkmark$ & \textbf{57} & \textbf{10} & \textbf{79}  & \textbf{10} & $\checkmark$ (2/3) & \textbf{30} & \textbf{8} & \textbf{57}  & \textbf{10} & $\checkmark$ (2/3) \\

gcc-java-25530            & 2  & 36  & (86)  & 17    & 3 & SB & \textbf{35}  & \textbf{0}  & \textbf{40}  & \textbf{0} & $\checkmark$ & \textbf{35}  & \textbf{0}  & \textbf{40}  & \textbf{0} & $\checkmark$ & \textbf{24} & \textbf{0}  & \textbf{40}  & \textbf{0}  & $\checkmark$       & \textbf{24} & \textbf{0} & \textbf{40}  & \textbf{0}  & $\checkmark$       \\

gcc-libstdc++-3584        & 1  & 40  & (104) & 37    & 3 & AV & \textbf{62}  &  --         & \textbf{79}  & --         & $\checkmark$ & \textbf{35}  &  --         & \textbf{64}  & --         & $\checkmark$ & 46          &  --         & 71           & --          & --                 & 30          &  --        & 57           & --          & --                 \\

gcc-libstdc++-21334       & 1  & 36  & (86)  & 27    & 3 & OV & \textbf{63}  &  --         & \textbf{78}  & --         & $\checkmark$ & \textbf{22}  &  --         & \textbf{33}  & --         & $\checkmark$ & 48          &  --         & 78           & --          & --                 & 15          &  --        & 33           & --          & --                 \\

gcc-libstdc++-40518       & 2  & 40  & (104) & 23    & 3 & AV & \textbf{43}  & \textbf{0}  & \textbf{56}  & \textbf{0} & $\checkmark$ & 30           & 0           & 56           & 0          & --           & \textbf{39} & \textbf{0}  & \textbf{56}  & \textbf{0}  & $\checkmark$       & 22          & 0          & 56           & 0           & --                 \\

glib-512624\_02           & 2  & 50  & (94)  & 27.5  & 3 & AV & \textbf{84}  & \textbf{2}  & \textbf{100} & \textbf{0} & $\checkmark$ & \textbf{47}  & \textbf{3}  & \textbf{80}  & \textbf{0} & $\checkmark$ & 60          & 4           & 85           & 5           & --                 & 38          & 5          & 65           & 5           & --                 \\

hash\_table               & 1  & 51  & (114) & 69    & 3 & AV & \textbf{41}  &  --         & \textbf{61}  & --         & $\checkmark$ & 4            &  --         & 21           & --         & --           & 29          &  --         & 54           & --          & --                 & 4           &  --        & 21           & --          & --                 \\

jetty-1187                & 1  & 24  & (98)  & 26    & 3 & AV & \textbf{81}  &  --         & \textbf{100} & --         & $\checkmark$ & \textbf{35}  &  --         & \textbf{78}  & --         & $\checkmark$ & 58          &  --         & 89           & --          & --                 & 27          &  --        & 67           & --          & --                 \\

lazy01\_false             & 2  & 39  & (55)  & 23    & 4 & OV & \textbf{91}  & \textbf{0}  & \textbf{100} & \textbf{0} & $\checkmark$ & \textbf{65}  & \textbf{0}  & \textbf{100} & \textbf{0} & $\checkmark$ & \textbf{87} & \textbf{0}  & \textbf{100} & \textbf{0}  & $\checkmark$       & \textbf{61} & \textbf{0} & \textbf{100} & \textbf{0}  & $\checkmark$       \\

lineEq\_2t\_01            & 1  & 35  & (58)  & 52    & 3 & AV & \textbf{69}  &  --         & \textbf{81}  & --         & $\checkmark$ & \textbf{46}  &  --         & \textbf{71}  & --         & $\checkmark$ & \textbf{52} &  --         & \textbf{76}  & --          & $\checkmark$       & \textbf{37} &  --        & \textbf{67}  & --          & $\checkmark$       \\

linux-iio                 & 1  & 54  & (87)  & 55    & 3 & DR & \textbf{40}  &  --         & \textbf{59}  & --         & $\checkmark$ & \textbf{20}  &  --         & \textbf{50}  & --         & $\checkmark$ & 27          &  --         & 41           & --          & --                 & 16          &  --        & 32           & --          & --                 \\

linux-tg3                 & 1  & 93  & (115) & 167   & 3 & DR & \textbf{19}  &  --         & \textbf{38}  & --         & $\checkmark$ & \textbf{13}  &  --         & \textbf{36}  & --         & $\checkmark$ & \textbf{8}  &  --         & \textbf{11}  & --          & $\checkmark$       & 2           &  --        & 9            & --          & --                 \\

list\_seq                 & 1  & 59  & (122) & 53    & 3 & AV & \textbf{58}  &  --         & \textbf{95}  & --         & $\checkmark$ & 6            &  --         & 30           & --         & --           & 40          &  --         & 75           & --          & --                 & 6           &  --        & 30           & --          & --                 \\

llvm-8441                 & 2  & 149 & (244) & 32.5  & 3 & AV & \textbf{74}  & \textbf{4}  & \textbf{92}  & \textbf{0} & $\checkmark$ & \textbf{18}  & \textbf{0}  & \textbf{33}  & \textbf{0} & $\checkmark$ & 55          & 7           & 83           & 8           & --                 & 12          & 0          & 33           & 0           & --                 \\

mozilla-61369             & 1  & 19  & (68)  & 6     & 1 & OV & \textbf{67}  &  --         & \textbf{100} & --         & $\checkmark$ & \textbf{67}  &  --         & \textbf{100} & --         & $\checkmark$ & \textbf{67} &  --         & \textbf{100} & --          & $\checkmark$       & \textbf{67} &  --        & \textbf{100} & --          & $\checkmark$       \\

ms\_queue02               & 1  & 67  & (97)  & 66    & 3 & AV & \textbf{44}  &  --         & \textbf{52}  & --         & $\checkmark$ & 5            &  --         & 20           & --         & --           & 35          &  --         & 48           & --          & --                 & 5           &  --        & 20           & --          & --                 \\

mysql5                    & 1  & 21  & (27)  & 28    & 3 & AV & \textbf{82}  &  --         & \textbf{89}  & --         & $\checkmark$ & \textbf{46}  &  --         & \textbf{67}  & --         & $\checkmark$ & 46          &  --         & 89           & --          & --                 & 25          &  --        & 67           & --          & --                 \\

mysql-644                 & 1  & 68  & (165) & 16    & 3 & AV & \textbf{38}  &  --         & \textbf{33}  & --         & $\checkmark$ & \textbf{38}  &  --         & \textbf{33}  & --         & $\checkmark$ & 25          &  --         & 33           & --          & --                 & 25          &  --        & 33           & --          & --                 \\

mysql-3596                & 1  & 30  & (83)  & 6     & 3 & DR & \textbf{100} &  --         & \textbf{100} & --         & $\checkmark$ & \textbf{100} &  --         & \textbf{100} & --         & $\checkmark$ & \textbf{67} &  --         & \textbf{100} & --          & $\checkmark$       & \textbf{67} &  --        & \textbf{100} & --          & $\checkmark$       \\

mysql-12848               & 1  & 51  & (142) & 14    & 2 & AV & \textbf{71}  &  --         & \textbf{67}  & --         & $\checkmark$ & 43           &  --         & 50           & --         & --           & \textbf{50} &  --         & \textbf{67}  & --          & $\checkmark$       & 29          &  --        & 50           & --          & --                 \\

read\_write\_false        & 1  & 78  & (140) & 58    & 5 & AV & \textbf{17}  &  --         & \textbf{27}  & --         & $\checkmark$ & \textbf{17}  &  --         & \textbf{27}  & --         & $\checkmark$ & \textbf{17} &  --         & \textbf{27}  & --          & $\checkmark$       & \textbf{17} &  --        & \textbf{27}  & --          & $\checkmark$       \\

reorder2\_false           & 8  & 50  & (105) & 10.5  & 5 & AV & \textbf{86}  & \textbf{14} & \textbf{100} & \textbf{0} & $\checkmark$ & \textbf{86}  & \textbf{14} & \textbf{100} & \textbf{0} & $\checkmark$ & \textbf{62} & \textbf{8}  & \textbf{100} & \textbf{0}  & $\checkmark$       & \textbf{62} & \textbf{8} & \textbf{100} & \textbf{0}  & $\checkmark$       \\

testconc02                & 1  & 15  & (19)  & 9     & 2 & AV & \textbf{89}  &  --         & \textbf{100} & --         & $\checkmark$ & \textbf{89}  &  --         & \textbf{100} & --         & $\checkmark$ & 56          &  --         & 100          & --          & --                 & 56          &  --        & 100          & --          & --                 \\

transmission-1.42         & 1  & 25  & (78)  & 5     & 3 & DR & \textbf{100} &  --         & \textbf{100} & --         & $\checkmark$ & \textbf{100} &  --         & \textbf{100} & --         & $\checkmark$ & \textbf{80} &  --         & \textbf{100} & --          & $\checkmark$       & \textbf{80} &  --        & \textbf{100} & --          & $\checkmark$       \\

VectPrime02               & 1  & 97  & (183) & 115   & 3 & AV & \textbf{25}  &  --         & \textbf{68}  & --         & $\checkmark$ & \textbf{9}   &  --         & \textbf{45}  & --         & $\checkmark$ & 18          &  --         & 59           & --          & --                 & 7           &  --        & 36           & --          & --                 \\

\midrule

apache2                   & 8  & 719 & (--)  & 235.5 & 3 & AV & \textbf{8}   & \textbf{2}  & \textbf{9}   & \textbf{2} & $\checkmark$ & 1            & 0           & 1            & 0          & --           & \textbf{7}  & \textbf{2}  & \textbf{9}   & \textbf{2}  & $\checkmark$       & 1           & 0          & 1           & 0            & $\checkmark$       \\

bankaccount-lock-for-loop & 5  & 103 & (--)  & 247   & 3 & AV & \textbf{46}  & \textbf{2}  & \textbf{44}  & \textbf{2} & $\checkmark$ & \textbf{12}  & \textbf{1}  & \textbf{30}  & \textbf{2} & $\checkmark$ & 40          & 2           & 42           & 2           & --                 & 9           & 1          & 23          & 3            & --                 \\

bankaccount-simple-lock   & 2  & 50  & (--)  & 45    & 3 & AV & \textbf{71}  & \textbf{0}  & \textbf{80}  & \textbf{0} & $\checkmark$ & \textbf{31}  & \textbf{0}  & \textbf{60}  & \textbf{0} & $\checkmark$ & 62          & 0           & 73           & 0           & --                 & 24          & 0          & 53          & 0            & --                 \\

bluetooth                 & 5  & 87  & (--)  & 35.8  & 3 & AV & \textbf{42}  & \textbf{0}  & \textbf{63}  & \textbf{0} & $\checkmark$ & 14           & 0           & 31           & 0          & --           & \textbf{36} & \textbf{0}  & \textbf{63}  & \textbf{0}  & $\checkmark$       & 11          & 0          & 31          & 0            & --                 \\

pool-simple-2             & 8  & 298 & (--)  & 885.5 & 3 & LV & \textbf{30}  & \textbf{1}  & \textbf{58}  & \textbf{2} & $\checkmark$ & 0            & 0           & 2            & 0          & --           & \textbf{29} & \textbf{1}  & \textbf{56}  & \textbf{2}  & $\checkmark$       & 0           & 0          & 2           & 0            & --                 \\

\bottomrule
Total                     & 90 &     &       &       &   &    & 58.8         &             & 72           &            & 88           & 35           &             & 54           &            & 47           & 45          &             & 67.7         &             & 61                 & 27          &            & 50.5        &              & 22                 \\
\bottomrule
  \end{tabularx}\\[2pt]
	\scriptsize
	  \begin{tabularx}{\textwidth}{@{}ll@{\hspace{1em}}ll@{\hspace{1em}}ll@{}}
    \textbf{\#T}: & No. of Traces in Benchmark &
    \textbf{LOC}: & Lines of Code$^a$ & AIT: & Average No. of Instructions in a Trace \\ 
    \textbf{ds}: & Basic Encoding & \textbf{cs}: & Control-Sensitive Encoding &
		\textbf{hs}: & Hazard-Sensitive Encoding \\ 
    S: & Slice Size / Trace Size &
    V: & \multicolumn{3}{@{}l} {No. of Variables in Slice / No. of Variables in Trace}  \\ 
    $\mu$: & Average &
    $\sigma$: & \multicolumn{3}{@{}l} {Standard Deviation} \\ 
    \textbf{RB}: & Reflects Concurrency Bug &
		AV: & Atomicity Violation & SB: & Sequential Bug \\ 
    DR: & Data Race & OV: & Order Violation &
    LV: & Linearizability Violation\\    
  \end{tabularx}\\
  {
  	\tiny
  	\begin{tabularx}{\textwidth}{lL}
  	$^a$ & LOC excluding comments and blank lines; LOC in
  parentheses are as stated in~\cite{Khoshnood2015}.
  	\end{tabularx}
  }
  \caption{Experimental comparison of sensitivity-configurations for
  slicing\label{tbl:benchmarks}}
\end{table}

%
%
%
%
%
%
%

\section{Related Work}
\label{sec:related-work}
The original work on
error invariants~\cite{ErmisSW12,Christ:flowSensitive} is
discussed in Sections \ref{sec:preliminaries} and
\ref{sec:fault-localization-paths}. 
Murali et al.~\cite{Murali:hybrid} relate error
invariants to unsatisfiable cores and consistency-based
diagnosis. The latter is also implemented in
\Concbugassist~\cite{Khoshnood2015}, a repair tool for concurrent
programs, and \Bugassist~\cite{Rupak:bugassist} for the
diagnosis of sequential bugs.
Both \Bugassist and \Concbugassist take into account multiple
traces simultaneously and can yield better accuracy in certain cases
(e.g., benchmarks \textsf{apache-25520} and \textsf{cherokee\_01} in
Section \ref{sec:experiments}). Neither
\cite{Rupak:bugassist,Khoshnood2015} nor \cite{Murali:hybrid}
report branch conditions (or statements explaining why they
hold).
On the benchmarks from~\cite{Khoshnood2015}, we found that
\Concbugassist yields similar reduction ratios as our tool using the
{\bf hs+ds} encoding. The dependency of \Concbugassist on a bounded
model checker for the constraint generation entails scalability
issues: even on a simplified version of \textsf{pool\_simpl\_2}
for which we provided the minimal unwinding depth necessary
to detect the bug, \Concbugassist 
timed out after 45 minutes, while our approach generated
a slice in 2.5 minutes for the non-simplified program.

Other static approaches for simplifying and summarizing concurrent
error traces include~\cite{DBLP:conf/popl/GuptaHRST15},
\cite{DBLP:conf/sas/HuangZ11}, \cite{DBLP:conf/sigsoft/JalbertS10},
and
\cite{DBLP:conf/cav/KashyapG08}. In~\cite{DBLP:conf/popl/GuptaHRST15},
an SMT solver and model enumeration is used to derive a symbolic
representation of 
\emph{all} reorderings of a given trace that violate a safety property,
which is then used to explain the bug.
Instead, we analyze a  single failing trace, ensuring that our encoding
explicitly captures which happens-before
relations are relevant for the faulty behavior.

Tools that attempt to minimize the number of context switches,
such as {\sc SimTrace} \cite{DBLP:conf/sas/HuangZ11} and {\sc
  Tinertia} \cite{DBLP:conf/sigsoft/JalbertS10},
are orthogonal to the approach presented in this paper. 

Many techniques for detecting race conditions
or atomicity/serializability violations
are geared towards specific bug
characteristics~\cite{FlanaganQ03,XuBH05,lu2008learning}.
Similarly, dynamic techniques such as {\sf Falcon}~\cite{ParkVH10}
and 
{\sf Unicorn}~\cite{Park12}
rely on bug patterns.
Our approach encodes data-dependencies rather than
relying on bug patterns or specific bug characteristics.
Recent work~\cite{Tab:abstraction} uses mining of
failing and passing traces to isolate erroneous sequences
of statements. Our technique only considers failing traces.

{\sc Afix}~\cite{DBLP:conf/pldi/JinSZLL11} and
{\sc ConcurrencySwapper}~\cite{DBLP:conf/cav/CernyHRRT13}
automatically fix concurrency-related errors.
The latter uses error invariants to generalize a linear error trace to a
partially ordered trace, which is then used to
synthesize a fix. This approach may potentially benefit from our more
fine-tuned trace encoding that enables error invariants to capture
concurrent data dependencies.
%
%

\section{Conclusion}
\label{sec:conclusion}

We proposed to augment error invariants with information
about inter-thread data dependency and hazards to
capture a broad range of concurrency bugs. Our technique
generates sound slices of concurrent error traces,
 enabling developers to quickly isolate and
focus on the relevant aspects of error traces.
We proved that the reported slices are sound and
sufficient to trigger the failure.
The experimental evaluation of our
prototype implementation showed that the approach is effective and
significantly reduces the amount of code that needs to be inspected.
%
%


\bibliographystyle{plain}
\bibliography{paper}{}

\clearpage
\appendix

\section{Proofs}
\label{sec:proofs}

\setcounter{theorem}{0}%
\begin{theorem}
  Let $P$ be a (concurrent) error trace for $\psi$ of length $k$ and let
  $I_0$, $I_1$ ,\ldots, $I_{k-1}$, $I_{k+1}$
  be error invariants (with $I_0=\true$ and
  $I_{k+1}=\false$) obtained from
  an inductive sequence interpolant
  for $\fsenc{n_1},\ldots,\fsenc{n_k},\psi$. Let $Q$ be the slice obtained
  from $P$ by removing each sub-path $P[i,j]$
  for which $I_{i-1}$ is inductive. Then $Q$
  is a sound control-sensitive slice for $P$.
\end{theorem}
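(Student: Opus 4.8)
The plan is to follow the argument for sequential flow-sensitive slicing (Theorem~6 of~\cite{Christ:flowSensitive}), re-done for the SSA/$\phi$-based encoding $\fsenc{\cdot}$, and to exploit that concurrency is immaterial here: $\fsenc{P}$ is just the conjunction $\bigwedge_{l=1}^{k}\fsenc{n_l}$ over the nodes of~$P$ with no contribution from edges (in contrast to $\hsenc{\cdot}$), so the whole argument treats~$P$ as a straight-line SSA program no matter how its threads interleave. From the inductive sequence interpolant we have $I_0=\true$, $I_{l-1}\wedge\fsenc{n_l}\Rightarrow I_l$ for $1\le l\le k$, and $I_k\wedge\psi\Rightarrow I_{k+1}=\false$, together with the variable conditions of Definition~\ref{def:ind_itp}. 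Two things must be established: that $Q$ is an error trace for~$\psi$ (Definition~\ref{def:sound_slice}) and that $Q$ is control-sensitive (Definition~\ref{def:fs_slice}).

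\emph{Step 1 (each $I_i$ is an error invariant for position~$i$).} This rests on faithfulness of $\fsenc{\cdot}$. For the forward direction, any execution of $\stmts_P[1,i]$ from an initial state induces an assignment satisfying $\bigwedge_{l=1}^{i}\fsenc{n_l}$: assignment nodes hold by definition of execution, condition nodes are encoded as $\true$, and for a $\phi$-node ${\sf x}_a\asgn\phi({\sf x}_b)$ the guard is a conjunction of exactly those conditions in whose scope (Definition~\ref{def:scope}) the definition of~${\sf x}_b$ lies, all satisfied along~$P$, so the guarded copy constraint holds; by the interpolant chain the assignment then satisfies $I_i$, which gives condition~(a) of Definition~\ref{def:err_inv}. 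For the backward direction (condition~(b)): if $s\models I_i$ and a non-blocking run of $\stmts_P[i+1,k]$ from~$s$ reaches a $\psi$-state, extend~$s$ by the values produced along that run to an assignment~$\alpha$; $\alpha$ satisfies $\bigwedge_{l=i+1}^{k}\fsenc{n_l}$ because, besides the assignment and ($\true$-encoded) condition nodes, every $\phi$-node's \emph{consequent} ${\sf x}_a={\sf x}_b$ is realised by the copy performed during the run, so the guarded implication is satisfied regardless of its guard (this is exactly why the ``history variables'' mentioned after Theorem~\ref{thm:conc_loc} are harmless). Since $\alpha$ agrees with~$s$ on $\var(I_i)$ it also satisfies $I_i$, and the chain up to $I_k\wedge\psi\Rightarrow\false$ is contradicted; hence~(b) holds (cf.~\cite{impact,Murali:hybrid}).

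\emph{Step 2 (soundness).} A removed block $[n_i,n_j]$ is removed because ``$I_{i-1}$ is inductive'', which (as explained before Theorem~\ref{thm:conc_loc}) means $\states(I_{i-1})$ is also an error invariant for~$j$ and $I_{i-1}\Rightarrow I_j$; with Step~1 this makes $I_{i-1}$ a recurring error invariant for $i-1$ and~$j$. On the sliced path the unchanged prefix $\stmts_P[1,i-1]$ lands in $\states(I_{i-1})$ by~(a), and thence every non-blocking run of $\stmts_P[j+1,k]$ ends $\psi$-false by~(b) at position~$j$; feasibility of the sliced path is obtained as in~\cite{Christ:flowSensitive}, using that every retained $\phi$-node forces the interpolants to carry exactly the guard conditions it depends on, so the branches along the retained suffix stay taken and a witnessing execution of~$P$ can be reused. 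Iterating over the pairwise-disjoint removed blocks shows that $Q$ is feasible and always violates~$\psi$, i.e., a sound slice. I expect this feasibility argument to be the main obstacle: because $\fsenc{\cdot}$ erases conditions to $\true$, one must argue that the interpolants nonetheless retain precisely the branch outcomes the shortened suffix needs in order not to block, and this is where the $\phi$-guard machinery, rather than the conditions themselves, does the work --- the chief departure from the basic-encoding case of~\cite{ErmisSW12}.

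\emph{Step 3 (control-sensitivity).} Let $\stmts_Q[m]=\stmts_P[p]$ be retained and let $\stmts_P[q]$ be a condition with $\stmts_P[p]$ in its scope. Since $\fsenc{\cdot}$ encodes conditions as $\true$, a retained $\stmts_P[p]$ is (effectively) an assignment, and the SSA value it defines reaches any later use only through a $\phi$-node~$n$ whose encoding $\fsenc{n}$ contains the guard, which includes $\enc(n_q)$, i.e., the condition $\stmts_P[q]$ as taken along~$P$ (the definition being in $\stmts_P[q]$'s scope). As some retained statement downstream depends on the output of~$n$, the interpolant just before~$n$ must imply the guard, hence $\stmts_P[q]$; and since the SSA variables of $\stmts_P[q]$ are not reassigned between $q$ and~$n$, this propagates back to an interpolant implying $\stmts_P[q]$ at the cut. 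Taking $h$ so that $\stmts_Q[h]$ is the last retained statement preceding $\stmts_P[q]$ in~$P$ and $\stmts_Q[h+1]$ occurs at or after it, the interpolant at that cut over-approximates the states reachable there (Step~1) and implies $\stmts_P[q]$, so every non-blocking run of $\stmts_Q[1,h]$ --- itself feasible as a prefix of the error trace~$Q$ --- ends in a state satisfying $\stmts_P[q]$, i.e., violating $\neg(\stmts_P[q])$. Hence $\stmts_Q[1,h]$ is an error trace for $\neg(\stmts_P[q])$, as Definition~\ref{def:fs_slice} requires; nested scopes need no separate treatment since the guard of~$n$ already collects every condition in whose (transitive) scope the definition lies.
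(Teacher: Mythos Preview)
Your proposal is correct in outline and reaches the same two conclusions (soundness and control-sensitivity) by similar means in Step~3, but your Step~1/Step~2 decomposition for soundness differs genuinely from the paper's. The paper does \emph{not} re-establish that each $I_i$ is an error invariant semantically; instead it proves a transfer lemma: the sequence interpolant for $\fsenc{n_1},\ldots,\fsenc{n_k},\psi$ is \emph{also} a sequence interpolant for $\enc(n_1),\ldots,\enc(n_k),\psi$ (treating ${\sf x}_i\asgn\phi({\sf x}_j)$ as the plain copy ${\sf x}_i\asgn{\sf x}_j$). This is shown by case distinction on the node type: for conditions, $\fsenc{n_i}=\true$ gives $J_{i-1}\Rightarrow J_i$, hence $J_{i-1}\wedge\enc(n_i)\Rightarrow J_i$; for $\phi$-nodes, either $J_{i-1}\Rightarrow\guard(n_j)$ (so the guarded implication collapses to the copy) or $J_i$ cannot depend on the copy at all. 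Soundness then follows immediately by invoking the already-known result for $\enc{}$-based slicing~\cite{ErmisSW12}. What this buys is that the feasibility worry you flag in Step~2 disappears entirely: you are back in the basic-encoding world where conditions are present in the encoding and the standard argument applies without the $\phi$-guard detour. Your direct semantic argument is also viable, and your observation that the $\phi$-consequent is always satisfied along the run (making the guard irrelevant for part~(b)) is the right way to handle the history-variable issue; but the paper's reduction is shorter and cleanly separates the new ingredient ($\phi$-guards) from the old machinery.

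For control-sensitivity your Step~3 and the paper's argument are essentially the same: both trace a retained assignment's SSA value forward to the arbiter/$\phi$-node at the end of the scope, argue that the interpolant there must imply the guard (hence the branch condition $\stmts_P[q]$), and propagate this implication back to a cut preceding $\stmts_P[q]$. The paper makes the variable-occurrence reasoning explicit via three observations about when an SSA variable must appear non-redundantly in an interpolant, whereas you argue it more informally; both versions tacitly assume the retained statement is an assignment rather than a condition, which is harmless since condition nodes are encoded as $\true$ and therefore never force retention on their own.
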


\begin{proof}[(sketch)]
  Let $J_0,\ldots,J_{k+1}$
  be the sequence interpolant for $\fsenc{\tau}$ corres\-ponding
  to the error invariant
  $I_0$, $\ldots$, $I_{k+1}$.
  We show by induction over the length of $P$
  that it is also a sequence interpolant for
  $\enc(P)$ (where ${\sf x}_i\asgn\phi({\sf x}_j)$ is
  replaced by ${\sf x}_i\asgn{\sf x}_j$ in $P$).
  Our claim holds trivially for the base case,
  since $J_0=\true$ and $J_{k+1}=\false$.
  Assume that $J_{j-1}\wedge\enc(P)[j]\Rightarrow J_j$
  for $1\leq j\leq i$.  
  We distinguish the following cases:
  \begin{compactenum}
  \item If $\stmt_P[i]$ is an assignment without
    a $\phi$-function, then $\fsenc{n_i}=\enc(n_i)$
    and $J_i\wedge \enc(n_i)\Rightarrow J_{i+1}$.
  \item Let $\stmts_P[i]$ be the condition $R$. Since
    $\fsenc{n_i}=\true$, it holds that $J_i\Rightarrow J_{i+1}$,
    and therefore also $J_i\wedge\enc(n_i)\Rightarrow J_{i+1}$.
  \item\label{case:phi} Let $\stmts_P[i]$ be the assignment
    ${\sf x}_i\asgn\phi({\sf x}_j)$ at node $n_i$, and
    let $\stmts_P[j]$ be
    the statement by which ${\sf x}_j$ is assigned at node $n_j$.
    If $J_i\Rightarrow \guard(n_j)$ then
    $J_i\wedge \enc(n_i)\Rightarrow J_{i+1}$.
    Otherwise, $J_{i+1}$ cannot depend on 
    $\enc(n_i)$, and $J_i\Rightarrow J_{i+1}$;
    thus $J_i\wedge\enc(n_i)\Rightarrow J_{i+1}$.
  \end{compactenum}
  Since the error invariants
  $I_0,I_1,\ldots,I_{k},I_{k+1}$
  are derived from a sequence interpolant, and fragments
  $\\stmts_P[i,j]$ are only sliced if $I_{i-1}$ can replace $I_j$
  as an error invariant,
  we have for every
  $\stmts_Q[i]=\stmts_P[j]$ that the states reachable from
  an initial state via $\stmts_Q[1,i]$ are in $I_j$,
  i.e., $Q$ is a sound slice for the error trace $P$.

  It remains to show that $Q$ is control-sensitive. First,
  we observe the following properties of sequence interpolants:
  \begin{compactenum}
  \item[(a)] If an assignment $\stmts_P[i]={\sf x}_i\asgn e$ 
    is relevant, then $J_{i+1}$ must contain a non-redundant
    occurrence of ${\sf x}_i$, since
    otherwise $J_i\Rightarrow J_{i+1}$. 
  \item[(b)] Conversely, if $J_j$ contains a non-redundant occurrence
    of ${\sf x}_i$ and $J_{j+1}$ does not, then $\stmts_P[j]$ must be 
    relevant.
  \item[(c)] If $\stmts_P[j]$ 
    is an assignment defining ${\tt x}_j$ and $I_j$ refers to
    a previous version ${\tt x}_i$ defined before $\stmts_P[j]$,
    then $I_{j+1}$ must not contain ${\tt x}_i$, since $I_{j+1}$
    otherwise violates the condition that interpolants must only refer
    to shared variables; in particular, that means that $\stmts_P[j]$
    must not be sliced.
  \end{compactenum}
  
  Assume that $\stmts_P[i]={\sf x}_i\asgn e$ is relevant and in scope of
  an assumption $\stmts_P[j]$. Then $J_{i+1}$ contains ${\sf x}_i$ by
  observation (a) above. Since the program execution $P$ 
  reaches the exit node of the main thread before $\psi$ is asserted,
  $P$ must eventually traverse an arbiter node (representing
  the end of the scope of $\stmts_P[j]$ or a context switch)
  annotated with the statement $\stmts_P[l]$. 
  The value of ${\sf x}_i$ is propagated either directly (in
  which case $J_l$ contains ${\sf x}_i$) or
  via a sequence of assignments (in which case $J_l$ contains
  a variable ${\sf y}$ which replaced ${\sf x}_i$ by means
  of a relevant assignment in the same scope, as explained
  in observation (b) above) to $\stmts_P[l]$. Consequently,
  $J_l$ contains ${\sf x}_i$ (or {\sf y}).

  The statement $\stmts_P[l]$ contains the premise $\guard(n_j)$
  and replaces ${\sf x}_i$ ({\sf y}, respectively) with a 
  newer version of the variable. Accordingly, $J_{l+1}$ must
  not contain ${\sf x}_i$ (or {\sf y}). Therefore 
  $J_l\Rightarrow\guard(n_j)$.

  Since $\guard(n_i)$ is not modified by $\stmts_P[i,l]$, it must
  also be implied by $J_{i-1}$ in order to be propagated to $J_{l-1}$.
  By the soundness argument above, $I_{i-1}$ is established by a
  prefix of $\stmts_Q[1,j]$.
\end{proof}

\begin{theorem}
  Let $P$ be a concurrent error trace and let
  $Q$ be the slice obtained
  from $P$ as explained in Section \ref{subsec:data_dep}.
  Then $Q$ is a sound hazard-sensitive slice of $P$.
\end{theorem}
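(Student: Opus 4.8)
The plan is to follow the two-part structure of the proof of Theorem~\ref{thm:conc_loc}, but now working with the sequence $\In_1,\Out_1,\ldots,\In_k,\Out_k$ extracted from $\hsenc{n_1},\hsenc{\langle n_1,n_2\rangle},\ldots,\hsenc{n_k},\psi$: first show that removing the fragments identified by the four slicing conditions preserves the error, and then show that the resulting $Q$ satisfies Definition~\ref{def:hazard_slice}.

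For soundness I would first pin down the intended model of the encoding: take the execution of $P$ and interpret $\hb{\cdot}{\cdot}$ as the total order of the nodes along $P$ (the transitive closure of the edge atoms, anti-symmetric because $P$ is a sequence) and $\rdvar{{\sf x}}{\cdot}$, $\wrvar{{\sf x}}{\cdot}$ as the read/write footprints of the nodes. For each $\pi$-node $n_p$ and the most recent writer $n_j$ of $\sf x$ before $n_p$ in $P$, the condition $\mathrm{DEP}(n_p,n_j)$ holds in this model ($\raw{{\sf x}}{n_j}{n_p}$ holds, and any other writer $m$ of $\sf x$ is either before $n_j$, giving $\waw{{\sf x}}{m}{n_j}$, or after the read at $n_p$, giving $\war{{\sf x}}{n_p}{m}$), so $\hsenc{n_p}$ forces ${\sf x}_p={\sf x}_j$, matching the intended $\pi$-semantics. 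Hence (under full initialisation) $\hsenc{P}$ has this execution as its unique model, it violates $\psi$, so $\bigwedge_i\hsenc{n_i}\wedge\bigwedge_i\hsenc{\langle n_i,n_{i+1}\rangle}\wedge\psi$ is unsatisfiable and the sequence interpolant exists. I would then check, exactly as in case~\ref{case:phi} of the proof of Theorem~\ref{thm:conc_loc}, that after forgetting the $\hb{\cdot}{\cdot}$, $\rdvar{{\sf x}}{\cdot}$, $\wrvar{{\sf x}}{\cdot}$ atoms this is also an inductive sequence interpolant for the plain encoding $\enc(P')$, where $P'$ replaces each $\pi$-node ${\sf x}_i\asgn\pi(\ldots)$ by ${\sf x}_i\asgn{\sf x}_j$ with $n_j$ the most recent definition of $\sf x$: the $\pi$-step either has the preceding interpolant entailing $\mathrm{DEP}(n_p,n_j)$, so ${\sf x}_i={\sf x}_j$ is propagated and the step goes through, or it does not, in which case the next interpolant cannot depend on ${\sf x}_i$ and the implication holds anyway. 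With this, the four slicing conditions say precisely that each removed fragment is inductive, so by Definition~\ref{def:err_inv} and the soundness argument in the proof of Theorem~\ref{thm:conc_loc}, $Q$ is a sound slice of $P$; the $\hb{\cdot}{\cdot}$, $\rdvar{{\sf x}}{\cdot}$, $\wrvar{{\sf x}}{\cdot}$ atoms appearing in the interpolants are ghost variables, as in the remark following Theorem~\ref{thm:conc_loc}, and do not affect this.

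For hazard-sensitivity the key lemma, an analogue of observations (a)--(c) in the proof of Theorem~\ref{thm:conc_loc}, is: whenever a $\pi$-node $n_p$ arbitrating accesses to a shared variable $\sf x$ survives in $Q$, so do all writes to $\sf x$ in $P$ and the read at $n_p$. Indeed, since the value ${\sf x}_p$ produced at $n_p$ is used downstream, propagating ${\sf x}_p={\sf x}_j$ across $\hsenc{n_p}$ forces the interpolant immediately before $n_p$ to entail $\mathrm{DEP}(n_p,n_j)$; unfolding $\raw{{\sf x}}{\cdot}{\cdot}$, $\war{{\sf x}}{\cdot}{\cdot}$, $\waw{{\sf x}}{\cdot}{\cdot}$ into their defining conjunctions over $\wrvar{{\sf x}}{\cdot}$, $\rdvar{{\sf x}}{\cdot}$ and $\hb{\cdot}{\cdot}$, this interpolant must mention $\wrvar{{\sf x}}{m}$ non-redundantly for \emph{every} writer $m$ of $\sf x$ (writers before $n_j$ via $\waw{{\sf x}}{m}{n_j}$, writers after $n_p$ via $\war{{\sf x}}{n_p}{m}$, and $n_j$ itself) together with $\rdvar{{\sf x}}{n_p}$. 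As $\hsenc{m}$ is the only member of the interpolation sequence introducing $\wrvar{{\sf x}}{m}$ and $\hsenc{n_p}$ the only one consuming it, the variable condition of Definition~\ref{def:ind_itp} confines $\wrvar{{\sf x}}{m}$ to interpolants strictly between $m$ and $n_p$; hence $\In_m$ cannot contain it while $\Out_m$ must, which, exactly as in observations (a) and (c), prevents slicing the statement at $m$ (and symmetrically for $n_p$ via $\rdvar{{\sf x}}{n_p}$). Now take a surviving $\stmts_Q[h]=\stmts_P[j]$ with an inter-thread data dependency to $\stmts_P[i]$. If it is read-after-write, the use-definition pair is already explicit in the SSA form of $P$, so $\stmts_P[i]$ is retained by observations (a)/(b). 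Otherwise $\stmts_P[j]$ accesses a shared variable $\sf x$; since $\stmts_P[j]$ survives, the value or write-footprint of $\sf x$ is relevant downstream, which keeps the $\pi$-node $n_p$ reading $\sf x$ that witnesses the dependency relevant; by the lemma, all writes to $\sf x$ and the read at $n_p$ are retained, and $\stmts_P[i]$ is one of these, so it survives in $Q$, as Definition~\ref{def:hazard_slice} requires.

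The routine parts transfer from Theorem~\ref{thm:conc_loc}: the intended model, the reading of the slicing conditions as the ``recurring'' property, and the ghost-variable remark. The delicate step, which I expect to take most of the work, is the hazard-sensitivity lemma: making precise that the bounded, biconditional definitions of $\raw{{\sf x}}{\cdot}{\cdot}$, $\war{{\sf x}}{\cdot}{\cdot}$, $\waw{{\sf x}}{\cdot}{\cdot}$ drive the footprint and order atoms of \emph{every} competing access into the interpolant before a relevant $\pi$-node, that these occurrences are non-redundant exactly when the $\pi$-value is used, and that the ``an assignment of a relevant variable is not sliced'' reasoning of observations (a)--(c) lifts from ordinary program variables to this enlarged vocabulary. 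One must also verify that a surviving statement touching a shared variable really does keep its witnessing $\pi$-node relevant, so that no relevant dependency is orphaned, and handle the read-after-write, write-after-read and write-after-write cases uniformly (including the $\pi$-parameter optimisation of the footnote in Section~\ref{subsec:data_dep}).
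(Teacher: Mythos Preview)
Your proposal is correct and follows essentially the same approach as the paper's proof sketch: the core mechanism in both is that the premise $\mathrm{DEP}(n_i,n_j)$ of the $\pi$-encoding forces $\wrvar{{\sf x}}{m}$ for every competing writer $m$ into the interpolant preceding the $\pi$-node, and since $\hsenc{m}$ is the only place this atom is introduced, the variable condition of sequence interpolants prevents slicing $m$. The paper organizes this as a direct case split on whether the surviving statement is a read or a write (deferring the write case to the read case via the associated $\pi$-node), whereas you extract the same argument as a standalone lemma and also supply an explicit soundness argument that the paper's sketch omits; but the substance is the same.
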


\begin{proof}[(sketch)]
  Assume that $\stmts_Q[k]=\stmts_P[i]$ (at node $n_i$) and there is
  an inter-thread data-dependency between $\stmts_P[j]$ (at node $n_j$)
  and $\stmts_P[i]$.
  \begin{itemize}
  \item Assume that $\stmts_P[i]$ is a read access to {\sf x}, i.e.,
    $\rdvar{\sf x}{n_i}$. RAW dependencies are readily handled
    by the SSA encoding. The remaining WAR dependencies
    are encoded in the $\pi$-function of the
    arbiter node $n_l$ for $n_i$
    (which assigns the variable ${\sf x}_i$ used in
    $\stmts_P[i]$). If $\In_i$ refers to ${\sf x}_i$ (i.e., the value of
    ${\sf x}_i$ is relevant in $\stmts_P[i]$) then the $\pi$-node is included
    in the slice (if not, the data-dependency has no impact on the
    failure of the trace).
    
    Formula (\ref{eq:pi-encoding2}) requires that
    every node $m\in P$ that writes to {\sf x} is either visited
    before the most recent write access to {\sf x} or after the
    read access $\stmts_P[i]$. Assume that $\hb{m}{n_j}$ in $P$.
    Then $\war{\sf x}{n_i}{m}$ evaluates to \false, and the
    interpolant $\In_l$ must imply $\wrvar{\sf x}{m}$, since
    otherwise $\war{\sf x}{m}{n_j}$ in the premise
    (\ref{eq:pi-encoding2}) of Formula (\ref{eq:pi-encoding1}) 
    cannot be discharged. The
    predicate $\wrvar{\sf x}{m}$ can only be introduced into
    the interpolation sequence through $\hsenc{m}$, and therefore
    node $m$ cannot be sliced away. If $\hb{n_i}{m}$ in $P$,
    then the premise of $\Out_l$ can only be discharged
    by $\wrvar{\sf x}{m}$ contributed by $\hb{n_i}{m}$.
    Consequently, if node $m$ is not included, the final
    interpolant cannot be $\false$.
  \item Assume that $\stmts_P[i]$ is a write access to {\sf x}.
    Then there must also be a relevant read access to {\sf x} in
    $Q$. The encoding of the corresponding $\pi$-node
    will enforce that all write accesses conflicting with $\stmts_P[i]$
    are included in the trace.
  \end{itemize}
\end{proof}


\section{Case Study: Lock-free Concurrent Data Structure}
\label{sec:case_study_pool}

\newcommand{\poolins}{\textsf{pool\_ins}\xspace}
\newcommand{\poolrem}{\textsf{pool\_rem}\xspace}
\newcommand{\EMPTY}{\textsf{EMPTY}\xspace}

\begin{figure}
\centering
\begin{minipage}{\linewidth}
\begin{minipage}[t]{.65\linewidth}
\begin{lstlisting}[language=C,captionpos=b,%frame=single,
%label=lst:pool,
%caption={Faulty pool based on Treiber stacks},float=tb,
basicstyle=\sffamily\scriptsize,
commentstyle=\itshape\color{darkgray},
keywordstyle=\bfseries,
%multicols=2,
numbersep=5pt,
morekeywords={assert},
%numbers=left,numberstyle=\sffamily\tiny\color{gray},numberblanklines=false,
tabsize=2]
treiber_stack_t* ts[2];
void pool_ins(int v) { 
	// we assume that v != EMPTY
	int idx = random()%2;
	ts_push(&ts[idx], v);
}


int pool_rem() {
	int idx = random()%2;
  for (int i = 0; i < 2; i++) {
  	int v = ts_pop(&ts[(idx+i)%2]);
  	if (v != EMPTY) return v;
  }
	return EMPTY;
}
\end{lstlisting}
\end{minipage}%
\begin{minipage}[t]{.35\linewidth}
\begin{lstlisting}[language=C,captionpos=b,%frame=single,
%label=lst:pool,
%caption={Faulty pool based on Treiber stacks},float=tb,
basicstyle=\sffamily\scriptsize,
commentstyle=\itshape\color{darkgray},
keywordstyle=\bfseries,
%multicols=2,
numbersep=5pt,
morekeywords={assert},
%numbers=left,numberstyle=\sffamily\tiny\color{gray},numberblanklines=false,
tabsize=2]
void thread1() {
	pool_ins(1);
	pool_rem();
}

void thread2() {
	pool_ins(2);
	int v = pool_rem();
	assert(v != EMPTY);
}
\end{lstlisting}
\end{minipage}
\end{minipage}
\vspace{-1em}
\caption{Faulty thread pool implementation based on Treiber stacks\label{lst:pool}}
\end{figure}

In the following, we discuss benchmark \textsf{pool\_simple\_2} from
Table~\ref{tbl:benchmarks} in more depth as it demonstrates that, in
general, both control and hazard-sensitive information is needed to
obtain useful bug explanations.

Benchmark \textsf{pool\_simple\_2} was provided by Andreas Haas at
University of Salzburg, as a real-world example of a linearizability
bug in concurrent data structures. It comprises a faulty
implementation of a concurrent data structure that stores objects in a
pool. Listing~\ref{lst:pool} shows a simplified version of the actual
source code that we analyzed. In order to reduce contention, objects
that are inserted into this pool are stored in two different
stacks~\textsf{ts[0]} and~\textsf{ts[1]}.  Each time \poolins is called, a stack
will be picked randomly and the passed value will be stored in the
selected stack.  Thereby, the amount of conflicting operations from
different threads at each concurrent data structure is reduced.  In
order to further reduce contention, one can add more stacks.

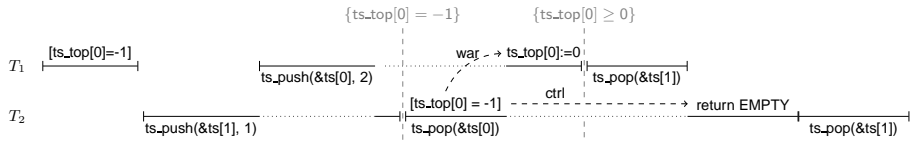
\begin{figure}[tb]
  \vspace*{-1em}
	\centering
	\scalebox{0.67}{
	\begin{tikzpicture}
          \node (T1) at (-2, 0) {$T_1$};
          \node (T2) at (-2, -1) {$T_2$};

                \draw[|-|] (-1.5,0)--node[above]{\sf [ts\_top[0]=-1]}(0.4,0);

                \draw[|-] (0.5,-1) -- node[below] {\sf
                  ts\_push(\&ts[1], 1)} (2.8,-1);
                \draw[dotted] (2.8,-1)--(5.1,-1);
                \draw[-|] (5.1,-1)--(5.6,-1);

                \draw[dashed,gray] (5.65,.75)node[above]{$\{\mathsf{ts\_top[0]=-1}\}$}--(5.65,-1.5);

                \draw[|-] (5.7,-1) -- node[below] {\sf
                  ts\_pop(\&ts[0])} node[above] {\sf [ts\_top[0] = -1]}
                (7.7,-1); 

                \draw[|-] (2.8,0) -- node[below] {\sf
                  ts\_push(\&ts[0], 2)} (5.1,0);
                \draw[dotted] (5.3,0)--(7.7,0);
                \draw[-|] (7.7,0)-- node[above] {\sf
                  ts\_top[0]:=0} (9.2,0);

                \draw[dashed,gray]
                (9.25,.75)node[above]{$\{\mathsf{ts\_top[0]\geq
                    0}\}$}--(9.25,-.7);
                \draw[dashed,gray] (9.25,-1.1)--(9.25,-1.5);

                \draw[|-|] (9.3,0) -- node[below] {\sf
                  ts\_pop(\&ts[1])} (11.3,0);

                \draw[dotted] (7.7,-1) -- (11.3,-1);
                \draw[-|] (11.3,-1)-- node[above] {\sf
                  return EMPTY} (13.5,-1);
                \draw[|-|] (13.5,-1) -- node[below] {\sf
                  ts\_pop(\&ts[1])} (15.7,-1);

                \draw[dashed,->] (7.8,-.75)--node[near
									start,above]{\sf{ctrl}} (11.3,-.75);
                \path[dashed,->] (6.5,-.5) edge [bend left=30] node[above]{\sf{war}} (7.6,.25);
		
        \end{tikzpicture}}
	\caption{Error trace of program in Fig.~\ref{lst:pool} with dependencies ([\dots] denote conditions)
	\label{fig:case_study_pool}}
\end{figure}

The \poolrem operation of the pool may incorrectly return the
designated value \EMPTY although the pool is not empty (checked via
the assertion in \textsf{thread2}).  The problem can occur when \poolrem is
called and, for example, stack~\textsf{ts[1]} is empty but \textsf{ts[0]} is not.
Figure~\ref{fig:case_study_pool} shows a corresponding faulty
program execution.  
We describe the explanation our tool provides for one of the faulty
traces generated for the pool example.  To highlight the problematic
dependencies in the execution, we need to inspect the trace at
instruction level, as the interferences are not reflected at the level
of the overlapping procedure calls.  The implementation of the {\sf
  treiber\_stack} data-structure uses the entry {\sf ts\_top[i]} to
store the index of the top element of the $\mathsf{i}^{\text{th}}$
stack. The value of {\sf ts\_top[i]} is $-1$ if the corresponding
stack is empty. The write access to the actual stack is implemented
using an atomic \emph{compare-and-swap} operation (guaranteeing
exclusive access to the top of the stack), which only succeeds if no
other thread interferes with the write operation. As shown in
Listing~\ref{lst:pool}, {\sf pool\_rem} iterates over all stacks to
check whether one of them contains an element that can be removed.

In the generated trace, the assertion that {\sf ts\_top[i]} must be
$-1$ for all stacks if the pool is reported to be empty fails.  The
statements in Figure~\ref{fig:case_study_pool} are part of the slice
reported by our tool and highlight the underlying problem: thread
$T_1$ pushes an element onto stack $0$ ({\sf ts\_top[0]:=0})
\emph{after} thread $T_2$ has determined that the stack is empty.
This is captured by the anti-dependency between the statements {\sf
  [ts\_top[0]=-1]} and {\sf ts\_top[0]:=0} (denoted by the \textsf{war} edge).  
Thread $T_1$ then proceeds to remove the element previously
pushed by $T_2$ onto stack $1$. Consequently, thread $T_2$ finds stack
1 empty and reports that the pool is empty (based on a stale value of
{\sf ts\_top[0]}), even though stack 0 still contains one
element. This is captured by the control-dependency between {\sf
  [ts\_top[0]=-1]} and {\sf return EMPTY} (denoted by the \textsf{ctrl} edge). 
Thus, even though the assignment {\sf ts\_top[0]:=0} is
implemented as an atomic compare-and-swap operation in the actual
code, this does not guarantee correctness of the lock-free
implementation: the operation {\sf pool\_rem} is not
\emph{linearizable}, since its effect is not instantaneous.  

The core of the problem is accurately reflected by the control-sensitive
slice generated by our tool:
{\sf return EMPTY} is necessary to satisfy the premise
of the assertion, and {\sf ts\_top[0]:=0} must be included to
contradict the conclusion. The return statement is
control-dependent on {\sf [ts\_top[0] = -1]}, and the
explanation therefore includes the initialization of 
{\sf ts\_top[0]}. 

While the control-sensitive slice that our tool computes
does \emph{not} explicitly include the condition
{\sf [ts\_top[0] = -1]}, it is 
reflected by the error invariant $\mathsf{ts\_top[0]}=-1$. 
This information is explicit in the hazard-sensitive slice generated
by our tool, which includes the anti-dependent statements  
{\sf [ts\_top[0] = -1]} in thread $T_2$ and {\sf ts\_top[0]:=0} in
thread $T_1$. Notably, the control and hazard-sensitive slice is only
marginally longer than the control-sensitive slice: the former
contains 264 instructions, whereas the latter contains 255 instructions,
or 28\% of the 924 instructions of the original trace. In addition,
our tool drops roughly 44\% of the variables of the original trace.


\end{document}